\newtheorem{theorem}{Theorem}
\newcommand{\bnabla}{\boldsymbol{\nabla}}
\begin{document}
\title{RG theory of spontaneous stochasticity for Sabra model of turbulence}
\author{Alexei A. Mailybaev} 
\affiliation{Instituto de Matem\'atica Pura e Aplicada -- IMPA, Rio de Janeiro, Brazil}
\email{alexei@impa.br}

\begin{abstract}
We consider fluctuating Sabra models of turbulence, which exhibit the phenomenon of spontaneous stochasticity: their solutions converge to a stochastic process in the ideal limit, when both viscosity and small-scale noise vanish. In this paper, we develop a renormalization group (RG) approach to explain this phenomenon. Here, RG is understood as an exact relation between the stochastic properties of systems with different dissipative and noise terms, in contrast to the Kadanoff-Wilson coarse-graining procedure, which involves small-scale integration. We argue that the stochastic process in the ideal limit is represented as a fixed point of the RG operator. The existence of such a fixed point confirms not only the convergence in the ideal limit, but also the universality of the spontaneously stochastic process, i.e. its independence from the type of dissipation and noise. 
The dominant eigenmode of the linearized RG operator determines the leading correction in the convergence process. 
The RG eigenvalue $\rho \approx 0.84 \exp(2.28i)$ is universal and it turns out to be complex, which explains the rather slow and oscillatory convergence in the ideal limit. 
These universality predictions are accurately confirmed by numerical data.
\end{abstract}

\maketitle

\section{Introduction} 

Since the pioneering work of Lorenz~\cite{lorenz1969predictability} in 1969 it has been argued \cite{leith1972predictability,eyink1996turbulence,boffetta2017chaos,liao2025noise} that tiny errors, even as small as molecular noise~\cite{ruelle1979microscopic}, have a significant impact on the predictability of developed turbulent flows at moderate (turnover) times. Recent numerical studies suggested that although the deterministic description becomes unsuitable, the dynamics is still well defined as a stochastic process~\cite{mailybaev2016spontaneously,thalabard2020butterfly,bandak2024spontaneous,ortiz2025spontaneous}. 
The mathematical formulation of this stochastic process involves a limit in which the noise term in the equations of motion vanishes simultaneously with the viscous term. This limit yields a stochastic process that solves a formally deterministic initial value problem for an ideal flow described by the Euler equations.
Such preservation of randomness in the zero-noise limit is called spontaneous stochasticity.
A similar phenomenon was also detected in the description of particle trajectories carried by deterministic but rough (non-differentiable) velocity fields \cite{bernard1998slow,falkovich2001particles,kupiainen2003nondeterministic,barlet2025spontaneous}. 
Here the spontaneous stochasticity arises in the simultaneous limit of vanishing regularization of the velocity field and noise. 
In this limit trajectories remain stochastic despite they are carried by a formally deterministic velocity field. 
In both cases, a necessary (but not sufficient) condition is the non-uniqueness of solutions to the limiting (ideal) problem, 
and the stochastic description assumes a random choice of such non-unique solutions~\cite{eyink2024space}. Spontaneous stochasticity of trajectories is often called Lagrangian, distinguishing it from Eulerian spontaneous stochasticity of velocity fields.

Eulerian spontaneous stochasticity was conjectured to be universal~\cite{mailybaev2016spontaneously,bandak2024spontaneous,ortiz2025spontaneous}, i.e. the limiting stochastic process does not actually depend on the specific form of dissipative and noise terms. Such a counterintuitive property would be of great practical importance, meaning that small-scale flow details are irrelevant for large-scale probabilistic prediction. Theoretical understanding of this universality in practical fluid models is a very difficult task, but some understanding has already been achieved using toy models. In particular, the existence and universality of spontaneously stochastic solutions has been proven in some specific models~\cite{eyink2020renormalization,mailybaev2023spontaneously,drivas2024statistical}. For a more general class of models represented as dynamics on a fractal space-time lattice, the universality of spontaneous stochasticity has been explained in terms of the renormalization group (RG) approach~\cite{mailybaev2023spontaneous,mailybaev2025rg}.

Our goal here is to extend the RG approach to explain spontaneous stochasticity and its universality in shell turbulence models. Shell models are popular "toy" models that share many non-trivial features of realistic turbulent flows~\cite{frisch1999turbulence,biferale2003shell}, and we focus on the so-called Sabra model~\cite{l1998improved}. Since the RG method was previously developed for discrete-time dynamics on a fractal lattice~\cite{mailybaev2023spontaneous,mailybaev2025rg}, its extension to shell models requires addressing two key issues: the continuous-time formulation and scale invariance with respect to both time and space. This extension is important not only within the framework of shell models, but also as a step toward a possible application of the RG approach to more realistic fluid systems.
Developments in this direction for deterministic models (without noise) were presented in~\cite{mailybaev2024rg}; however, a similar extension to the stochastic setting faces significant mathematical difficulties in defining the RG operator.
Instead, in the present paper we approximate the continuous-time model by dynamics on a properly designed spacetime lattice.
This use of lattice models in situations where analysis in the continuous formulation is problematic is reminiscent of analogous constructions in statistical mechanics and quantum field theory.

The idea is to approximate the Sabra model by a discrete model (akin to an adaptive finite difference scheme) in a way that exactly preserves scaling symmetries and the energy balance. This approximation depends on the parameter $\varepsilon$, so that the solutions of the discrete models converge to the solutions of the Sabra model as $\varepsilon \to 0$. 
The RG approach is formulated in this class of discrete models as an exact relation between their stochastic dynamics for different dissipative and noise terms.
As a consequence, we explain the spontaneously stochastic solutions and their universality  by the existence of a fixed-point attractor of the RG operator. This approach not only explains previous numerical observations, but also yields new predictions. Namely, we show that the leading eigenmode of the linearized RG operator controls the convergence to the spontaneously stochastic solutions. 
This RG eigenvalue is universal, and it also turns out to be complex, which explains the peculiar oscillatory form of convergence. These predictions are verified by detailed numerical simulations.

The paper is organized as follows. In Section \ref{sec2} we formulate the conjecture of spontaneous stochasticity starting with the full fluctuating Navier-Stokes model, and then extending it to the fluctuating Sabra model. Section~\ref{sec1C} introduced the discrete-time fluctuating shell models. For such models, the RG operator is introduced in Section~\ref{sec4}, and the spontaneous stochasticity is explained as the fixed-point RG attractor in Section~\ref{secRGFP}. The RG approach is limited to a class of regularizations and noise, which preserve symmetries of the ideal system; we call such regularizations canonical. The physical (viscous) regularization does not belong to this class. We discuss the extension of the RG theory to the viscous and thermal noise conditions of the original Sabra model in Section~\ref{sec5}. The paper ends with the Conclusion. Appendix contains some extra information and technical details.

\section{The concept of (Eulerian) spontaneous stochasticity} \label{sec2}

\subsection{Fluctuating Navier-Stokes equations} 
\label{subsecSpSt_A}

As a motivation for this work, we discuss in this section the incompressible three-dimensional Navier-Stokes equations with a microscopic (Landau-Lifshitz) fluctuation term. In the dimensionless form, these equations take the form~\cite{landau1959fluid,bandak2024spontaneous}
	\begin{equation}
	\label{eq1_1}
	\partial_t \mathbf{u} + \mathbf{u} \cdot \bnabla \mathbf{u} = -\bnabla p + \mathrm{Re}^{-1} \Delta \mathbf{u} 
	+ \sqrt{\Theta}\, \bnabla \cdot \boldsymbol{\xi} + \mathbf{f}, \quad \bnabla \cdot \mathbf{u} = 0.
	\end{equation}
Here $\mathbf{u}(\mathbf{x},t)$ is a velocity field, which we consider in a triply $2\pi$-periodic domain $\mathbf{x} \in \mathbb{T}^3$, $p(\mathbf{x},t)$ is a pressure, and $\mathbf{f}(\mathbf{x},t)$ is a deterministic forcing applied at large scales. The fluctuating stress $\boldsymbol{\xi}$ is modeled as a Gaussian random matrix field with mean zero and covariance $\langle \xi_{ij}(\mathbf{x},t) \xi_{kl}(\mathbf{x}',t') \rangle = \left( \delta_{ik}\delta_{jl}+\delta_{il}\delta_{jk}-\frac{2}{3}\delta_{ij}\delta_{kl} \right) \delta^3(\mathbf{x}-\mathbf{x}')\delta(t-t')$. System (\ref{eq1_1}) depends on two dimensionless parameters: $\mathrm{Re}$ and $\Theta$. Here $\mathrm{Re} = UL/\nu$ is the Reynolds number composed of the kinematic viscosity $\nu$ dividing the product of characteristic (large-scale) velocity $U$ and scale $L$. The second parameter is expressed in terms of thermodynamic properties of the fluid as $\Theta = 2\nu k_BT/\rho L^4U^3$, where $k_B$ is Boltzmann’s constant, $T$ is absolute temperature, and $\rho$ is mass density. 
In addition, one can consider the Galerkin cutoff to Fourier modes with wave numbers $|\mathbf{k}| \le 2\pi/\delta$, i.e. equate all modes with wave numbers $|\mathbf{k}| > 2\pi/\delta$ to zero.
Here $\delta = \ell_{\mathrm{mic}}/L$ refers to the microscopic (molecular) scale $\ell_{\mathrm{mic}}$ at which the macroscopic description (\ref{eq1_1}) cease to be valid. Mathematically, the Galerkin truncation conveniently reduces Eq.~(\ref{eq1_1}) to a very large but still finite system of stochastic differential equations considered in the Fourier representation, thereby ensuring the well-posedness~\cite{bandak2022dissipation,eyink2024space}. In particular, the gradient of the noise term in Eq.~(\ref{eq1_1}) is well defined in this truncated Fourier representation.

Our focus will be on solving the initial value problem, where the initial condition
	\begin{equation}
	\label{eq1_1IC}
 	\mathbf{u}(\mathbf{x},0) = \mathbf{u}_0(\mathbf{x})
 	\end{equation}
is a given (deterministic) velocity field. 
Due to the noise term present in system (\ref{eq1_1}), the solution is a stochastic process, which determines the probability distribution of velocity fields at times $t > 0$. 
Note that stochasticity is introduced only through microscopic noise, while both the large-scale force $\mathbf{f}$ and the initial state $\mathbf{u}_0$ are deterministic.

In typical developed turbulent flows, all three dimensionless parameters, $\mathrm{Re}^{-1}$, $\Theta$ and $\delta$, are extremely small.
For example, in the atmospheric boundary layer, their estimates (by order of magnitude) are \cite{garratt1994atmospheric,bandak2024spontaneous}
	\begin{equation}
	\label{eq1_3}
	\mathrm{Re}^{-1} \sim 10^{-8}, \quad \Theta \sim 10^{-38}, \quad \delta \sim 10^{-10}.
	\end{equation}
The extreme smallness of these parameters suggests that the solution can be approximated by considering their vanishing limit.  
The traditional approach is to take the limits $\Theta \to 0$ and $\delta \to \Theta$ first, which yield the deterministic Navier--Stokes equations.
Then one can take the inviscid limit $\mathrm{Re}^{-1} \to 0$ (usually written as $\mathrm{Re} \to \infty$), which leads to the Euler equations for an ideal fluid as
	\begin{equation}
	\label{eq1_1Euler}
	\partial_t \mathbf{u} + \mathbf{u} \cdot \bnabla \mathbf{u} = -\bnabla p+\mathbf{f}, \quad \bnabla \cdot \mathbf{u} = 0.
	\end{equation}
The problem is that, based on numerical evidence, turbulent solutions of the Navier–Stokes equations do not appear to converge as $\mathrm{Re} \to \infty$.

The spontaneous stochasticity (SpSt) hypothesis suggests that a correct asymptotic description of developed turbulence requires a different formulation of the limit.
Namely, it is necessary to consider the limit in which all three small parameters, $\mathrm{Re}^{-1}$, $\Theta$ and $\delta$, tend to zero simultaneously~\cite{bandak2024spontaneous,ortiz2025spontaneous}.
For example, the limit compatible with the values (\ref{eq1_3}) can be formulated as 
	\begin{equation}
	\label{eq1_4}
	\mathrm{Re}^{-1} \to 0, \quad \Theta = \mathrm{Re}^{-\beta} \to 0, \quad \delta = \mathrm{Re}^{-\gamma} \to 0,
	\end{equation}
with the exponents $\beta = 38/8$ and $\gamma = 10/8$. 
This limit leads directly to the Euler equations~(\ref{eq1_1Euler}), and we refer to any limit of this type as an ideal limit, generalizing the concept of the inviscid limit to include vanishing noise.
The formulation (\ref{eq1_4}) does not mean that the physical parameters vary in a fixed proportion. Rather, it provides a proper mathematical approximation to the (very small, but finite) physical parameters (\ref{eq1_3}).

We formulate the properties expected from the spontaneously stochastic limit as~\cite{mailybaev2015stochastic,mailybaev2016spontaneously,bandak2024spontaneous,ortiz2025spontaneous}
	\begin{itemize}
	\item[(\textit{i})] \textit{Convergence}: the solution of Eqs.~(\ref{eq1_1}) and (\ref{eq1_1IC}) converges in distribution in the ideal limit.
	\item[(\textit{ii})] \textit{Stochasticity}: the limiting dynamics is not deterministic, but is a stochastic process.
	\item[(\textit{iii})] \textit{Spontaneity}: realizations of the limiting stochastic process are solutions 
	of the deterministic Euler system (\ref{eq1_1Euler}) and (\ref{eq1_1IC}). 
	\item[(\textit{iv})] \textit{Universality}: this process does not depend on the formulation of the ideal limit.
	\end{itemize}
These conditions imply that the stochastic process solving the fluctuating Navier-Stokes system converges, in the ideal limit, to a stochastic process that solves the formally deterministic initial-value problem for the Euler equations.
A necessary requirement for this limiting process to exist is that the Euler system (\ref{eq1_1Euler}) and (\ref{eq1_1IC}) is ill-posed, namely, has non-unique (weak) solutions~\cite{de2010admissibility,daneri2021non}. 
However, the mere existence of non-unique solutions does not, by itself, guarantee either convergence or a stochastic form of the limit.
The third property is natural and can be justified mathematically~\cite{eyink2024space}. 
Universality is a much stronger property: it means that the limiting stochastic process is identical for a wide class of regularizations, noise types, and formulations of the ideal limit.
For example, the limit should be independent of the exponents $\beta$ and $\gamma$ in Eq.~(\ref{eq1_4}), at least within certain intervals of their values. The reader should regard our characterization of spontaneous stochasticity more as a guideline than as a proposal for a precise mathematical formulation. Indeed, its interpretation and validity for the three-dimensional fluctuating Navier-Stokes system remain poorly understood, while simplified models suggest a broader range of possible scenarios~\cite{mailybaev2025rg}.

Detailed and convincing evidence for spontaneous stochasticity is provided by shell models of turbulence, which are toy models of the Navier-Stokes system, as we discuss next. 
As for the fluctuating Navier-Stokes system itself, convergence to a statistical solution in the ideal limit has been observed numerically for two-dimensional flows starting from the initial state of a vortex sheet, in close connection with the Kelvin-Helmholtz instability~\cite{fjordholm2016computation,thalabard2020butterfly}. 
In particular, in the work \cite{thalabard2020butterfly} the universality of the spontaneously stochastic solution was tested for two types of regularization: viscous and point vortex approximation. A recent study~\cite{ortiz2025spontaneous} confirms spontaneous stochasticity for the fluctuating Navier-Stokes system on a logarithmic lattice, which is another type of toy model.

\subsection{Fluctuating Sabra shell model}

Shell models of turbulence are toy models designed to mimic the multiscale dynamics of the Navier-Stokes system~\cite{frisch1999turbulence,biferale2003shell}.
In this paper, we consider the fluctuating Sabra model~\cite{l1998improved,bandak2022dissipation}, which features many properties of turbulence in the fluctuating Navier-Stokes equations (\ref{eq1_1}). In particular, it confirms the spontaneous stochasticity~\cite{bandak2024spontaneous}.

Shell models mimic the physical space by a discrete sequence of scales $\ell_n = 2^{-n}$ for integer shell numbers $n$. The respective wavenumbers are defined as $k_n = 2^n$. Each scale (shell $n$) is described by a complex-valued variable $u_n(t)$, which is called shell velocity and depends on time $t$. The fluctuating dynamics of the Sabra model is governed by the equations
	\begin{equation}
	\label{eq1_5}
	\frac{du_n}{dt} = B_n-\mathrm{Re}^{-1} k_n^2 u_n + \sqrt{\Theta}\, k_n \eta_n, \quad n = 1,\ldots,N,
	\end{equation}
where $\mathrm{Re}$ and $\Theta$ have the same meaning as for the Navier-Stokes system (\ref{eq1_1}).
The nonlinear terms $B_n$ mimic the quadratic nonlinearities of the Navier-Stokes system and are defined as
	\begin{equation}
	\label{eq1_6}
	B_n
	= i \Big( k_{n+1}u_{n+2}u_{n+1}^*-\frac{1}{2}k_{n}u_{n+1}u_{n-1}^*+\frac{1}{2}k_{n-1}u_{n-1}u_{n-2} \Big),
	\end{equation}
where $i$ is the imaginary unit and the asterisks denote complex conjugation. 
These terms are designed to conserve the inviscid invariants of total energy $\mathcal{E} = \sum |u_n|^2$ and helicity $\mathcal{H} = \sum (-1)^nk_n|u_n|^n$ analogous to the inviscid conservation laws of the ideal Euler equations~\cite{l1998improved}. The fluctuations are represented by complex white-noise terms $\eta_n(t)$ with the covariance $\langle \eta_n(t)\eta^*_{n'}(t') \rangle = 2\delta_{n n'} \delta(t-t')$. Additionally, we introduce the cutoff shell $N$ and define the respective cutoff scale $\delta = \ell_N = 2^{-N}$. Hence, the velocities at larger shell numbers are all set to zero:
	\begin{equation}
	\label{eq1_7cut}
 	u_n(t) \equiv 0 \quad \textrm{for} \ \ n > N. 
  	\end{equation}
The deterministic (large-scale) forcing is introduced by setting 
	\begin{equation}
	\label{eq1_7}
	u_0(t) \equiv b_0, \quad u_{-1}(t) \equiv b_{-1},
 	\end{equation}
where $b_0$ and $b_{-1}$ are given complex numbers. 
For simplicity, in this paper we consider constant-in-time boundary conditions (\ref{eq1_7}).
Note that the boundary variables enter into the nonlinear terms (\ref{eq1_6}) for $n = 1$ and $2$. The initial conditions are defined as
	\begin{equation}
	\label{eq1_8}
	u_n(0) = a_n, \quad n = 1,\ldots,N,
	\end{equation}
for given initial values $a_1,\ldots,a_N$. 

The solution $u_1(t),\ldots,u_N(t)$ of the initial value problem is a stochastic process that solves the system of stochastic differential equations (\ref{eq1_5}) for given initial conditions (\ref{eq1_8}) and fixed parameters $\mathrm{Re}$, $\Theta$, $N$ and boundary conditions. The concept of spontaneous stochasticity is expressed by the same four conditions (\textit{i--iv}) from Section~\ref{subsecSpSt_A}, which are formulated for the ideal limit. Equations (\ref{eq1_4}) provide a particular form of such ideal limit, in which the cutoff shell can be defined as the integer part $N = [\log_2 1/\delta]$. The numerical verifications of spontaneous stochasticity in shell models were reported in \cite{mailybaev2015stochastic,mailybaev2016spontaneous,mailybaev2016spontaneously,mailybaev2017toward,biferale2018rayleigh} considering the stochastic formulation with random fluctuations of viscosity or initial conditions, and in \cite{bandak2022dissipation} considering the small-scale white-noise noise as in Eq.~(\ref{eq1_5}). 
These works confirmed that evolution from a given deterministic initial state converges and remains a stochastic process in the ideal limit, and that this limit process is independent of the details of the regularization and the type of noise. We perform a similar numerical study below in Section~\ref{subsec_NSpSt},

In the remainder of this subsection, we provide some additional information about the ideal system that will be useful for our further analysis. The ideal model equations are obtained from Eqs.~(\ref{eq1_5}) and (\ref{eq1_6}) by setting all small parameters, $\mathrm{Re}^{-1}$, $\Theta$ and $\delta$, to zero. This yields
	\begin{equation}
	\label{eq1_5ID}
	\frac{du_n}{dt} = i \Big( k_{n+1}u_{n+2}u_{n+1}^*-\frac{1}{2}k_{n}u_{n+1}u_{n-1}^*+\frac{1}{2}k_{n-1}u_{n-1}u_{n-2} \Big).
	\end{equation}
The ideal system possesses the time-scaling symmetry 
	\begin{equation}
	\label{eq1_ST}
	u_n(t) \ \mapsto \ \alpha u_n(\alpha t), \quad \alpha > 0,
	\end{equation}
meaning that Eq.~(\ref{eq1_5ID}) is invariant under the transformation $\tilde{u}_n(t) = \alpha u_n(\alpha t)$ for all $n$. Another (space-scaling) symmetry is expressed as
	\begin{equation}
	\label{eq1_SS}
	u_n(t) \ \mapsto \ 2 u_{n+1}(t),
	\end{equation}
which means that Eq.~(\ref{eq1_5ID}) is invariant under the transformation $\tilde{u}_n(t) = 2 u_{n+1}(t)$. Note that both symmetries (\ref{eq1_ST}) and (\ref{eq1_SS}) are broken in the full model (\ref{eq1_5}).

It is convenient to split the nonlinear term (\ref{eq1_6}) into two parts $B_n = B_n^-+B_n^+$ as
	\begin{equation}
	\label{eq1B_2}
	B_n^- =
	i \big(-k_{n}u_{n+1}u_{n-1}^*+\frac{1}{2}k_{n-1}u_{n-1}u_{n-2} \big), 
	\quad
	B_n^+ =
	i \big( k_{n+1}u_{n+2}u_{n+1}^*+\frac{1}{2}k_{n}u_{n+1}u_{n-1}^* \big).
	\end{equation}
For the ideal system (\ref{eq1_5ID}), these terms determine the energy balance at shell $n$ in the form
	\begin{equation}
	\label{eq1B_3}
	\frac{d}{dt} |u_n|^2 = \Pi_{n-1}-\Pi_{n},
	\end{equation}
where we introduced the energy flux from shell $n$ to $n+1$ as
	\begin{equation}
	\label{eq1B_3_Pi}
	\Pi_{n} = -2\, \mathrm{Re} \left( u_n^*B_n^+ \right) 
	\equiv 2\, \mathrm{Re} \left( u_{n+1}^*B_{n+1}^- \right).
	\end{equation}
Note the last identity, which follows from the expressions in Eq.~(\ref{eq1B_2}).

\subsection{Fluctuating LES shell model}

Let us introduce another version of the fluctuating shell model, in which viscosity and noise are defined in the spirit of turbulence closures of the Large Eddy Simulation (LES)~\cite{pope2000turbulent}. We call it the LES shell model. 
This model will be important for our explanation of spontaneous stochasticity using the RG approach.

In our LES shell model, we consider the ideal equations for the shells $n = 1,\ldots,N-2$, while dissipative and noise terms are added to the shells $N-1$ and $N$. 
The viscous terms are taken as $-\nu_n k_n^2 u_n$ with the effective (eddy) viscosity $\nu_n = D |u_n|/k_n$, where $D > 0$ is a fixed parameter. 
Similarly, we introduce  effective (eddy) noise terms on dimensional basis as $\sqrt{\Theta_n}\, k_n \eta_n$, where $\Theta_n = C^2 |u_n|^3/k_n$ with the fixed parameter $C > 0$. 
In summary, the governing equations of the LES shell model take the form
	\begin{equation}
	\label{eq1_5SGS}
	\frac{du_n}{dt} = \left\{ \begin{array}{ll}
	B_n, & n = 1,2,\ldots,N-2; \\
	B_n-D k_n |u_n| u_n+ C k_n^{1/2} |u_n|^{3/2} \eta_n,& n = N-1,N.
	\end{array}\right.
	\end{equation}
The boundary and initial conditions are considered in the same form; see Eqs.~(\ref{eq1_7cut})--(\ref{eq1_8}). 

The ideal limit is defined as $N \to \infty$, in which the model (\ref{eq1_5SGS}) reduces to the same ideal system (\ref{eq1_5ID}). 
Like the fluctuating Sabra model, the fluctuating LES shell model (\ref{eq1_5SGS}) is spontaneously stochastic: it defines a stochastic process in the ideal limit. These limits appear to be the same in both models, as confirmed by numerical simulations in the next Section~\ref{subsec_NSpSt}.
	
The important property of the LES shell model is that Eqs.~(\ref{eq1_5SGS}) are invariant with respect to the time-scaling (\ref{eq1_ST}). Moreover, the space-scaling symmetry is also preserved when written as
	\begin{equation}
	\label{eq1_SSN}
	u_n(t),\, N \ \mapsto \ 2 u_{n+1}(t),\, N-1,
	\end{equation}
meaning that Eqs.~(\ref{eq1_5SGS}) are invariant under the transformations $\tilde{u}_n(t) = 2 u_{n+1}(t)$ and $\tilde{N} = N-1$. 

\subsection{Numerical observation of spontaneous stochasticity}
\label{subsec_NSpSt}

For numerical simulations we consider the fluctuating Sabra model (\ref{eq1_5}) with different Reynolds numbers and the other parameters given by Eq.~(\ref{eq1_4}) with the exponents $\beta = 2.2$ and $\gamma = 1.2$. The integration is carried out using the split-step technique: the first step integrates the nonlinear term $B_n$ using the fourth-order Runge-Kutta method and the second step integrates the viscous and noise terms as the Ornstein-Uhlenbeck process. The time step is selected as $\Delta t = 0.01\, \mathrm{Re}^{-1/2}$, which is $1\%$ of the turnover time at the Kolmogorov viscous scale as approximated by the K41 theory~\cite{frisch1999turbulence}. We take the boundary conditions $b_{-1} = b_0 = 1$ and the initial conditions $\mathbf{a} = (1,0,0,\ldots)$. The statistical analysis is performed by averaging over $4 \times 10^4$ stochastic simulations in each setting. 

Figure~\ref{fig1}(a) presents the result of a single simulation for $\mathrm{Re} = 10^9$. The solution demonstrates a ``laminar'' evolution till time $t_b \approx 0.8$, after which the dynamics becomes turbulent. The transition time $t_b$ corresponds to the blowup of the ideal system~\cite{dombre1998intermittency,mailybaev2012renormalization}. Figure~\ref{fig1}(b) shows the absolute values $|u_n|$ depending on the shell number at the final time $T = 1.5$. One can recognize the forcing, inertial, dissipative and thermal ranges of scales~\cite{frisch1999turbulence,bandak2022dissipation}. 

\begin{figure}[tp]
\centering
\includegraphics[width=0.83\textwidth]{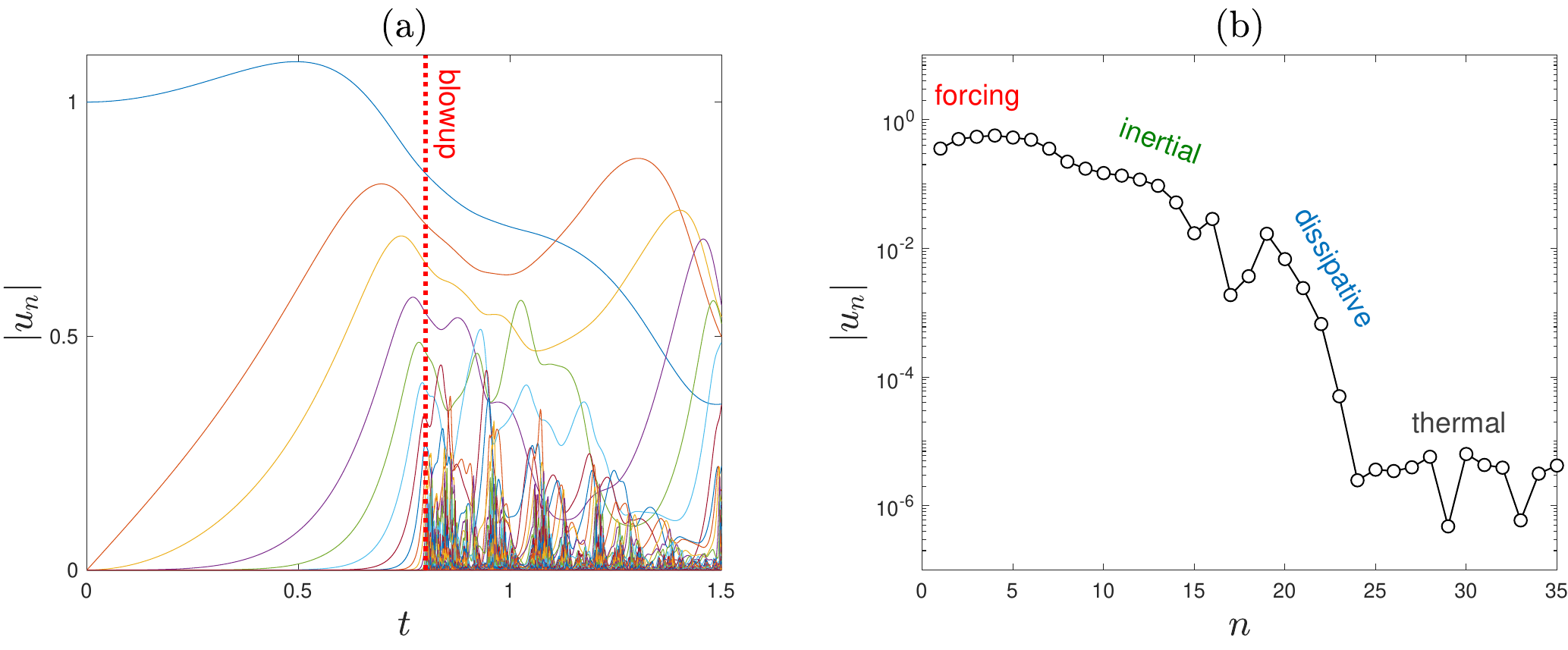}
\caption{(a) Evolution of the fluctuating Sabra model for $\mathrm{Re} = 10^9$. Turbulent dynamics begins at the blowup time $t_b \approx 0.8$ (vertical line). Each curve corresponds to a different shell $n = 1,\ldots,N$ with $N = 35$. (b) Turbulent state at the final time $T = 1.5$ for the realization from the left panel.
}
\label{fig1}
\end{figure}

We now verify that the solution is spontaneously stochastic. Figure \ref{fig2}(a) shows the absolute velocity $|u_1(t)|$ in fifty different simulations with the same parameters as in Fig.~\ref{fig1}(a). These graphs evidence that sample solutions are very close before and diverge after the blowup time. Figure~\ref{fig2}(b) computes the time dependence of respective standard deviations for different Reynolds numbers. 
As the Reynolds number increases from  $\mathrm{Re} = 10^5$ to $10^9$, these standard deviations tend to zero at times $t \le t_b$, but converge to positive values at $t > t_b$. 
Hence, the limiting solution remain stochastic in the ideal limit after the blowup time. Further confirmation is given in Figs.~\ref{fig2}(c,d), where we plot probability density functions (PDFs) of $\mathrm{Re}\, u_1(T)$ and $\mathrm{Re}\, u_2(T)$ at time $T = 1.5$. These graphs demonstrate the convergence to non-Dirac distributions. 

\begin{figure}[tp]
\centering
\includegraphics[width=0.8\textwidth]{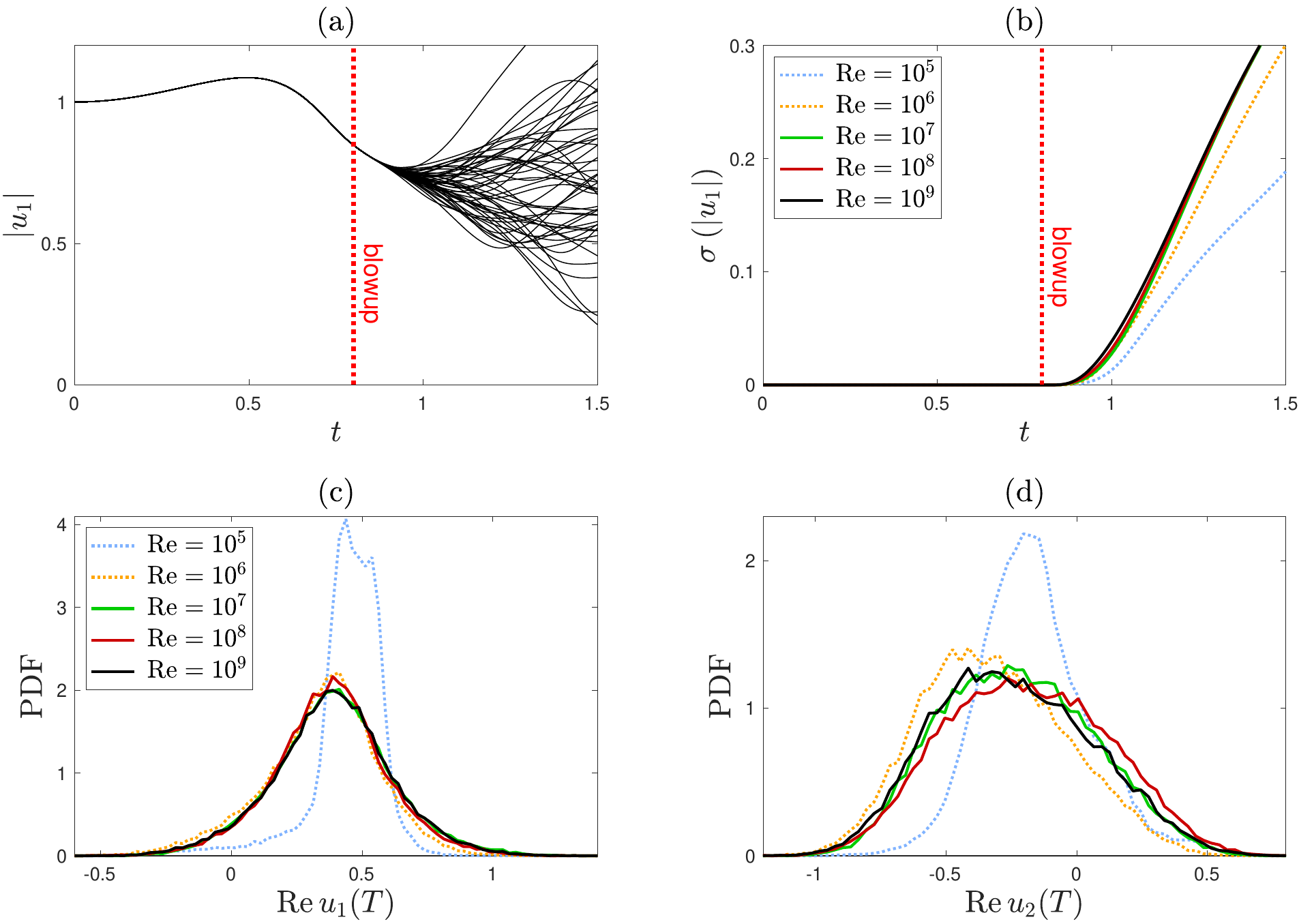}
\caption{(a) Absolute shell velocities $|u_1(t)|$ in fifty realizations of the fluctuating Sabra model for $\mathrm{Re} = 10^9$ with the same initial and boundary conditions. (b) Standard deviations of $|u_1(t)|$ for the Reynolds numbers $\mathrm{Re} = 10^5,\ldots,10^9$. (c,d) PDFs of $\mathrm{Re}\,u_1(T)$ and $\mathrm{Re}\,u_2(T)$ at $T = 1.5$ for the same sequence of Reynolds numbers. The PDFs are calculated using the histogram approach for $4 \times 10^4$ simulations.}
\label{fig2}
\end{figure}

\begin{figure}[tp]
\centering
\includegraphics[width=0.72\textwidth]{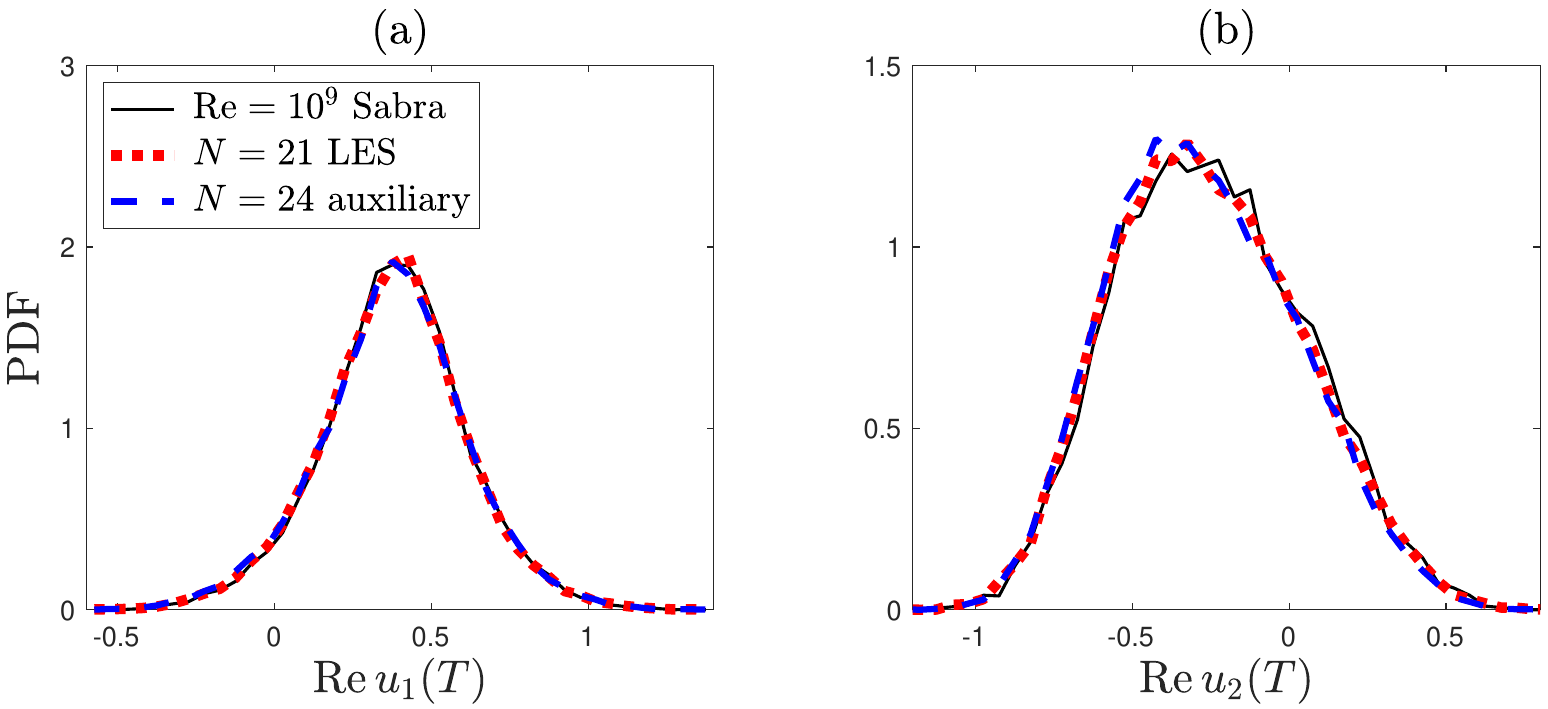}
\caption{Universality of spontaneous stochasticity for three different regularizations: the fluctuating Sabra model (\ref{eq1_5}), the fluctuating LES model (\ref{eq1_5SGS}) and the auxiliary model (\ref{eqRGV_1}). Comparison of PDFs for (a) $\mathrm{Re}\,u_1(T)$ and (b) $\mathrm{Re}\,u_2(T)$.}
\label{fig4}
\end{figure}

In order to verify the universality of spontaneous stochasticity, we performed a similar analysis for the fluctuating LES shell model (\ref{eq1_5SGS}) with $D = 1$, $C = 0.5$. This model is characterized by different dissipation and noise mechanisms, as well as a different definition of the ideal limit, $N \to \infty$. 
We performed numerical simulations with the same initial and boundary conditions. Our results confirmed that statistical properties at large scales converge, and that the limiting statistics are the same as for the fluctuating Sabra model. This is demonstrated in Fig.~\ref{fig4} showing the results for the fluctuating LES shell model with $N = 21$ and the fluctuating Sabra model with $\mathrm{Re} = 10^9$.

We note that although convergence to spontaneously stochastic solutions is clearly observed, it occurs relatively slowly: the accurate convergence requires very large (though still physically meaningful) values of Reynolds numbers. 
Even for shell models, accurate analysis of spontaneous stochasticity requires significant computational resources.

\section{Discrete-time fluctuating shell models}
\label{sec1C}

The Sabra shell model, although much simpler than the Navier-Stokes system, is still difficult to analyze numerically and theoretically. 
In this model, space is discretized, but time remains continuous and requires finite-difference approximation by numerical schemes. We now introduce a class of discrete-time models that, on the one hand, approximate the Sabra model to any accuracy (in the spirit of the finite-difference numerical method). 
On the other hand, such discrete-time models can themselves be considered as proper toy models of turbulence, since they preserve precisely the key properties: space-time scale invariance and energy conservation.
These models have a considerable numerical advantage since their simulations are much faster and very accurate. 
Most importantly, their conservative and symmetry-preserving structure, which is not retained by standard numerical schemes, allows for an explicit and transparent formulation of the renormalization-group theory.

\subsection{Discrete-time algorithm}
\label{subsec_DTSM}

Our discrete-time Sabra (DT-Sabra) model depends on the fixed discretization parameter $\varepsilon > 0$ and represents an adaptive first-order numerical scheme for the Sabra equations. This model exactly preserves both the scale invariance and the energy balance of the ideal system. The corresponding space-time lattice is shown schematically in Fig.~\ref{fig5}. Time steps at each shell depend on the state of the system and obey the hierarchical rule: for any node $(n,t)$ there exist simultaneous nodes $(m,t)$ at shells $m > n$. 

\begin{figure}[tp]
\centering
\includegraphics[width=0.85\textwidth]{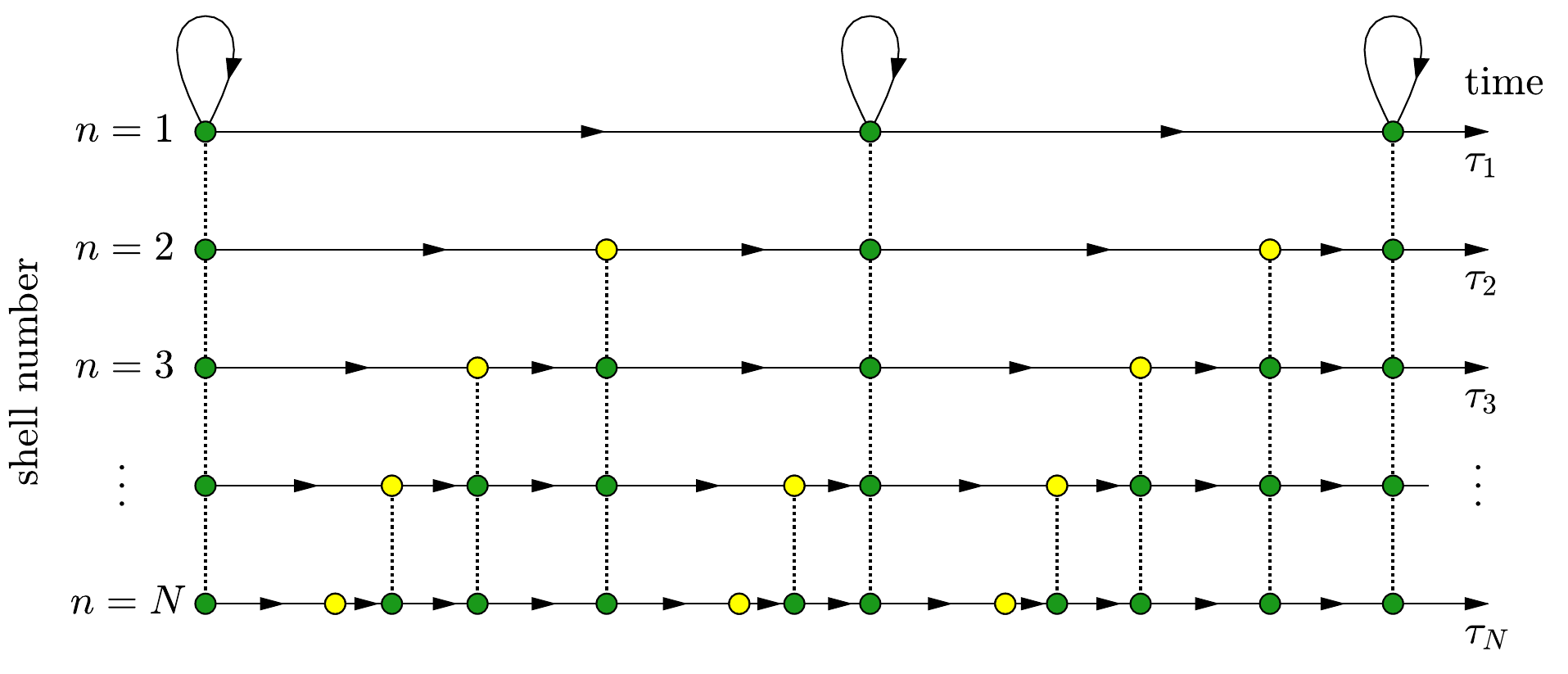}
\caption{Space-time lattice of the discrete-time shell model. Each arrow corresponds to one step of the integration algorithm, in which the time intervals are chosen adaptively at each shell $n$. This algorithm provides a first-order numerical scheme for the original Sabra model at the green nodes.}
\label{fig5}
\end{figure}

We now describe the solution in a given time interval $0 \le t \le T$ as an algorithm, in which every step advances time at a single node as shown by arrows in Fig.~\ref{fig5}. 
We define an intermediate state of the system as $(\mathbf{u},\boldsymbol{\tau})$, where $\mathbf{u} = (u_1,\ldots,u_N)$ is a vector of the shell variables and $\boldsymbol{\tau} = (\tau_1,\ldots,\tau_N)$ is a vector of respective times: each $u_n$ corresponds to the node $(n,\tau_n)$. 
Recall that $u_{-1}$ and $u_0$ are given by the boundary conditions (\ref{eq1_7}) and $u_n = 0$ for $n > N$.
According to the initial conditions (\ref{eq1_8}), the algorithm is initiated at the state $(\mathbf{u}_{\mathrm{ini}},\mathbf{0})$ with $\mathbf{u}_{\mathrm{ini}} = (a_1,\ldots,a_N)$ and $\mathbf{0} = (0,\ldots,0)$, and stops at the final state $(\mathbf{u}_{\mathrm{fin}},\boldsymbol{\tau}_{\mathrm{fin}})$ with $\boldsymbol{\tau}_{\mathrm{fin}} = (T,\ldots,T)$.

Given an intermediate state $(\mathbf{u},\boldsymbol{\tau})$, one step of the algorithm provides a new state $(\mathbf{u}_{\mathrm{new}},\boldsymbol{\tau}_{\mathrm{new}})$ as described below. 
Let $n$ as the smallest shell number such that $\tau_n$ is the minimal component of $\boldsymbol{\tau} = (\tau_1,\ldots,\tau_N)$. 
We define the corresponding time step as
	\begin{equation}
	\Delta \tau_n = \left\{ \begin{array}{ll} 
	\min \left( \varepsilon T_1,\, T-\tau_1 \right), & n = 1, \\[2pt]
	\min \left( \varepsilon T_n,\, \tau_{n-1}-\tau_n \right), & n \ge 2, 
	\end{array}\right. 
	\label{eqDTS_TSt}
	\end{equation}
where $\varepsilon T_n$ represents the $\varepsilon$-fraction of the local turnover time defined as
	\begin{equation}
	T_n = \frac{1}{k_n}
	\bigg(\sum_{m \ge n-2} |u_m|^2\bigg)^{-1/2}.
	\label{eqDTS_0T}
	\end{equation}
The second term in Eq.~(\ref{eqDTS_TSt}), either $T-\tau_1$ or $\tau_{n-1}-\tau_n$, is necessary to maintain the hierarchical structure of the lattice and the stopping condition at the final time $T$.
Then components of new time vector $\boldsymbol{\tau}_{\mathrm{new}}$ are defined as
	\begin{equation}
	\label{eqDTS_Tnew}
	\tau_{\mathrm{new},m} = \left\{ \begin{array}{ll}
		\tau_n+\Delta \tau_n, & m = n; \\
		\tau_m, & m \ne n;
	\end{array}\right.
	\end{equation}
where the time is advanced only at the shell $n$; see Fig.~\ref{fig5}.

The new state $\mathbf{u}_{\mathrm{new}}$ is determined by a sequence of two (split-step) transformations with an intermediate state $\mathbf{u}'$. The latter is determined by the nonlinear interactions as
	\begin{equation}
	\label{eqDTL_b}
	u'_{n} = u_{n}+B_n^+\left[ \frac{\mathbf{u}+\mathbf{u}'}{2} \right] \Delta \tau_n, \quad 
	u'_{n+1} = u_{n+1}+B_{n+1}^-\left[ \frac{\mathbf{u}+\mathbf{u}'}{2} \right] \Delta \tau_n.
	\end{equation}
Note that this relation modifies only the two components at shells $n$ and $n+1$, while all other components are left unchanged, so that $u'_m = u_m$ for $m \ne n,n+1$.
Here the nonlinear terms (\ref{eq1B_2}) are evaluated for the mean state $(\mathbf{u}+\mathbf{u}')/2$. The second relation in Eq.~(\ref{eqDTL_b}) is omitted at the cutoff shell $n = N$. 
Relations in Eq.~(\ref{eqDTL_b}) define the implicit finite-difference scheme for the nonlinear interaction in the Sabra model (\ref{eq1_5}), in which the two parts $B_n^-$ and $B_n^+$ of the nonlinear term are treated independently. This scheme is designed in such a way that the energy is conserved exactly; see Appendix \ref{subsec_A1}, where Eqs.~(\ref{eqDTL_b}) are solved analytically. The dissipation and noise are introduced by solving the Ornstein-Uhlenbeck process
	\begin{equation}
	\label{eq1_9}
	\frac{du_n}{dt} = -\nu_n u_n + \sigma_n \eta_n, \quad
	\nu_n = \mathrm{Re}^{-1} k_n^2, \quad \sigma_n = \sqrt{\Theta}\, k_n,
	\end{equation}
with the initial condition $u_n(\tau_n) = u'_{n}$ and the time interval $\Delta \tau_n$. Its solution has the form~\cite{oksendal2013stochastic}
	\begin{equation}
	u_{\mathrm{new},n} = u'_{n} e^{-\nu_n\Delta \tau_n}+
	\sigma_n \left(\frac{1-e^{-2\nu_n\Delta \tau_n}}{\nu_n}\right)^{1/2} z,
	\label{eq1_10}
	\end{equation}
where $z$ is a standard complex Gaussian random variable, which is independent at every step of the algorithm. 
Expression (\ref{eq1_10}) defines the $n$-th component of the new state $\mathbf{u}_{\mathrm{new}}$, while the other components are set to $u_{\mathrm{new},m} = u'_{m}$ for $m \ne n$. 

Additionally, whenever we encounter a synchronized time vector, $\boldsymbol{\tau} = (t,\ldots,t)$ with $0 \le t < T$, we perform one extra nonlinear transformation $(\mathbf{u},\boldsymbol{\tau}) \mapsto (\mathbf{u}_{\mathrm{new}},\boldsymbol{\tau})$ as
	\begin{equation}
	u_{\mathrm{new},n} = u_n+\left\{ \begin{array}{ll}
	B_1^-[ \mathbf{u}] \Delta \tau_1, & n = 1; \\
	0,& n \ge 2.
	\end{array}\right.
	\label{eq1_10BC}
	\end{equation}
This transformation precedes the usual step described earlier and accounts for term $B_1^-$, which otherwise would not be captured by the relations (\ref{eqDTL_b}); see the loop arrows in Fig.~\ref{fig5}.

We refer the reader to Appendix~\ref{app_alg} for an explicit end-to-end description of the algorithm, ensuring its reproducibility on a computer. There we also justify that the algorithm converges in a finite---though random---number of steps.

\subsection{Properties of discrete-time models}

The discrete shell model provides a random final state $\mathbf{u}_{\mathrm{fin}}$ at time $T$ together with intermediate states at the nodes of the space-time lattice.
Two types of nodes can be distinguished, which are indicated by dark green and yellow circles in Fig.~\ref{fig5}, featuring different stages in processing the $B_n^+$ and $B_n^-$ interactions.
At the dark green nodes, the algorithm provides an approximation to the Sabra model, which is first order in $\varepsilon$. 
The yellow (\textsf{T}-shaped) nodes have lower accuracy because the $B_n^+$ and $B_n^-$ interactions are not yet included in the correct proportion; these are auxiliary nodes that can be ignored for the final result.

Figure~\ref{fig6}(a) presents a solution obtained by numerical simulation of the discrete-time model with $\varepsilon = 0.1$, the Reynolds number $\mathrm{Re} = 10^9$ and the remaining parameters set as in Section~\ref{subsec_NSpSt}. Comparing with Fig.~\ref{fig1}(a), we see that the results are very close before the blowup and are followed by similar turbulent dynamics. In Fig.~\ref{fig6}(b) we show the PDFs of the real part $\mathrm{Re}\,u_1(T)$ at time $T = 1.5$ for $\mathrm{Re} = 10^7$ and different $\varepsilon$. These results confirm the convergence to the original (fluctuating Sabra) model as $\varepsilon \to 0$. 

\begin{figure}[tp]
\centering
\includegraphics[width=0.9\textwidth]{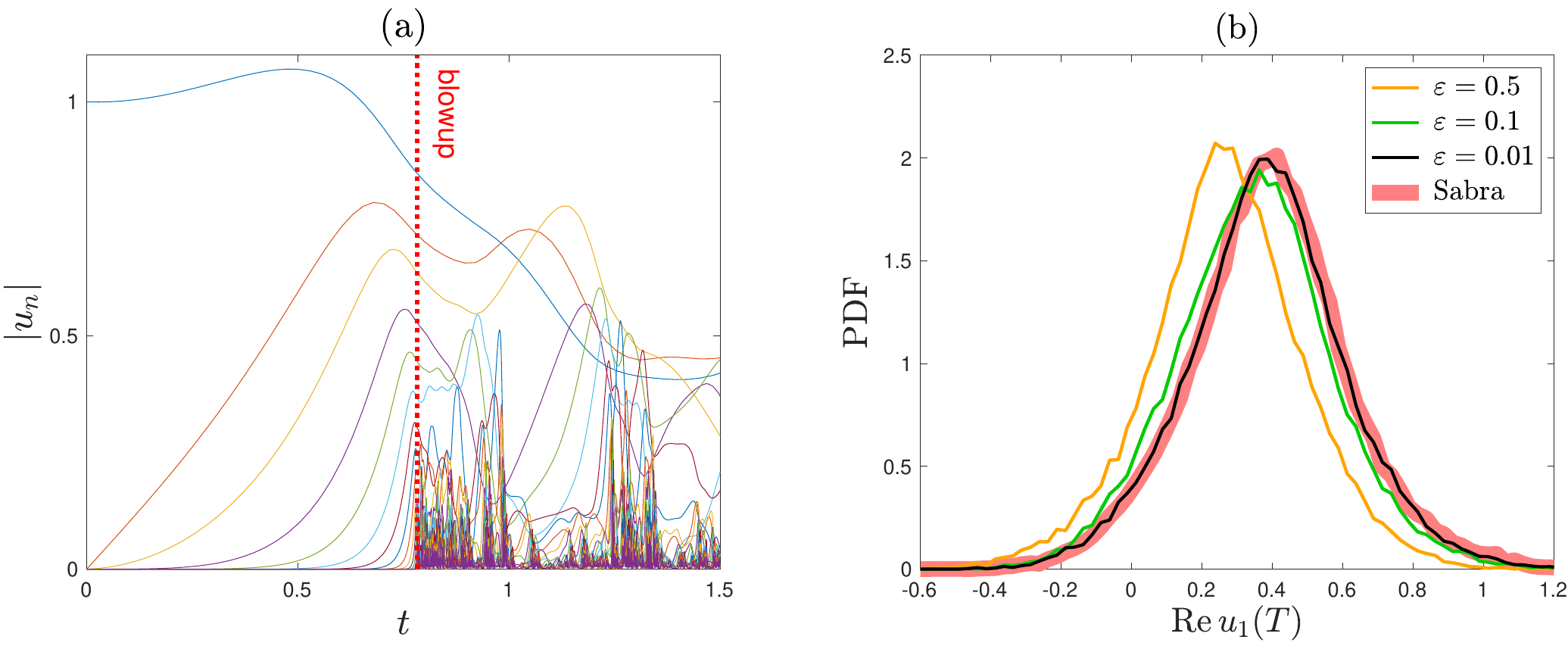}
\caption{(a) Evolution of the discrete-time Sabra model for $\varepsilon = 0.1$ and $\mathrm{Re} = 10^9$. (b) PDFs of $\mathrm{Re}\,u_1(T)$ for $\mathrm{Re} = 10^7$ and different $\varepsilon = 0.5$, $0.1$ and $0.01$ converging to the PDF of the fluctuating Sabra model.}
\label{fig6}
\end{figure}

The specific design of our algorithm allows us to preserve several key properties of the Sabra model.
First, the nonlinear scheme (\ref{eqDTL_b}) preserves both the time and space scale invariance of the ideal system; see Eqs.~(\ref{eq1_ST}) and (\ref{eq1_SS}).
Second, it conserves the energy: $\mathcal{E}[\mathbf{u}] = \mathcal{E}[\mathbf{u}']$.
Such precise conservation of energy is important for the stability of the algorithm, which otherwise may blowup in finite time. 
The energy changes at steps (\ref{eq1_10BC}) as the work done by the boundary conditions, and dissipates due to viscosity at  steps (\ref{eq1_10}). 
Due to scale invariance and energy conservation, the discrete-time Sabra model features intermittency with anomalous scaling analogous to the Navier-Stokes turbulence~\cite{frisch1999turbulence}; see Appendix~\ref{subsecA_2}. 
Thus, in addition to approximating the Sabra model, our discrete-time models are themselves full-fledged toy models of turbulence.

\subsection{Discrete-time LES shell model}

Similarly, one constructs the discrete-time version of the fluctuating LES model (\ref{eq1_5SGS}), which we call the DT-LES model. Here all steps of the algorithm remain precisely the same, except for Eq.~(\ref{eq1_9}) for the viscous and noise terms. The dissipation and noise terms are only present at the shells $N-1$ and $N$. Thus, Eq.~(\ref{eq1_9}) is replaced by
	\begin{equation}
	\label{eq1_9b}
	\frac{du_n}{dt} = -\nu_n u_n + \sigma_n \eta_n, \quad
	\nu_n = D k_n |u'_{n}|, \quad \sigma_n = C k_n^{1/2} |u'_{n}|^{3/2}, 
	\end{equation}
which is used only if $n = N-1$ or $N$.
Its solution is given by the same Eq.~(\ref{eq1_10}) but with new parameters $\nu_n$ and $\sigma_n$. 
Note that Eq.~(\ref{eq1_9b}) preserves the space-scale invariance (\ref{eq1_SSN}) of the fluctuating LES model.

The DT-Sabra and DT-LES models represent only one version of a discrete-time algorithm that respects scale invariance and energy conservation of the original continuous-time models. 
Other possibilities could be considered, in particular, providing higher orders of approximation to the Sabra model.
In this paper we do not explore these possibilities. 
Here we exploit the advantage of discrete-time models in that they allow us to formulate and verify an RG theory for spontaneous stochasticity.

\section{RG relation for DT-LES models} \label{sec4}

In this section we study the dependence of the DT-LES models of the cutoff shell $N$. 
Our goal is to derive an explicit relation between solutions of this model for two successive cutoffs, $N$ and $N+1$. More specifically, we introduce the notion of a flow kernel $\Phi^{(N)}$ that defines transition probabilities at any times and for any initial and boundary conditions. Then we derive the RG operator, $\mathcal{R}: \Phi^{(N)} \mapsto \Phi^{(N+1)}$, which expresses the flow kernel $\Phi^{(N+1)}$ in terms of $\Phi^{(N)}$. 
This derivation is based on the discrete form and scaling symmetries of DT-LES models. 

The central idea is that the RG operator represents the ideal limit $N \to \infty$ as the RG dynamics in the space of flow kernels, thereby linking limiting solutions to RG attractors. 
This theory has previously been developed for turbulence models on self-similar (fractal) lattices~\cite{mailybaev2023spontaneous,mailybaev2025rg}, and here we extend it to more sophisticated shell models.

\subsection{Flow kernel}

Let us consider the states at synchronized times $\boldsymbol{\tau}_j = (t_j,\ldots,t_j)$, where the index $j = 0,1,2,\ldots$ labels the time steps at the shell $n = 1$; see Figs.~\ref{fig5} and \ref{fig7}. We introduce the notation $\mathbf{s}_j = (\mathbf{u}_j,t_j)$, which collects the values of the state variables and the corresponding time into a single vector. 
Then  $\mathbf{s}_0 = (\mathbf{u}_{\mathrm{ini}},0)$ is the initial state and we denote the final state by $\mathbf{s}_{\mathrm{fin}} = (\mathbf{u}_{\mathrm{fin}},T)$. 

According to the algorithm described in Section \ref{subsec_DTSM}, the relation between two successive states $\mathbf{s}_{j}$ and $\mathbf{s}_{j+1}$ can be expressed as a sequence of two transitions. 
The first transition $\mathbf{s}_j \mapsto \tilde{\mathbf{s}}$ is given by a deterministic map denoted as $\alpha_{T,\mathbf{b}}$. This map is defined by Eq.~(\ref{eq1_10BC}) with $\boldsymbol{\tau} = (t_j,\ldots,t_j)$ and $\Delta \tau_1$ given by Eq.~(\ref{eqDTS_TSt}); see loop arrows in Fig.~\ref{fig7}.
The second transition $\tilde{\mathbf{s}} \mapsto \mathbf{s}_{j+1}$ is stochastic and is defined by the composition of steps (\ref{eqDTS_Tnew}), (\ref{eqDTL_b}) and (\ref{eq1_10}) leading to the next synchronized state with $\boldsymbol{\tau} = (t_{j+1},\ldots,t_{j+1})$; see straight arrows in Fig.~\ref{fig7}. This transition is represented by a Markov kernel $\Phi_{T,\mathbf{b}}^{(N)}(\mathbf{s}_{j+1}|\tilde{\mathbf{s}})$. Mathematically, this kernel is a probability distribution (probability measure) for $\mathbf{s}_{j+1}$ given the state $\tilde{\mathbf{s}}$. Note that both $\alpha_{T,\mathbf{b}}$ and $\Phi_{T,\mathbf{b}}^{(N)}$ depend on $T$ and $\mathbf{b} = (b_{-1},b_0)$, but are independent of $j$. Then, the transition probability for $\mathbf{s}_j \mapsto \mathbf{s}_{j+1}$ is described by the Markov kernel
	\begin{equation}
	\label{eqFK_1}
	\big(\Phi_{T,\mathbf{b}}^{(N)} \circ \mathrm{A}_{T,\mathbf{b}}\big)(\mathbf{s}_{j+1}|\mathbf{s}_j),
	\end{equation}
which is the same for all values of $j$.
Here $\mathrm{A}_{T,\mathbf{b}}(\tilde{\mathbf{s}}|\mathbf{s}_j) = \delta\big(\tilde{\mathbf{s}}-\alpha_{T,\mathbf{b}}(\mathbf{s}_j)\big)$ is the Dirac kernel corresponding to the map $\alpha_{T,\mathbf{b}}$, and we refer to  Appendix~\ref{subsecA_3} for definitions of composition and other operations between probability kernels.

\begin{figure}[tp]
\centering
\includegraphics[width=0.85\textwidth]{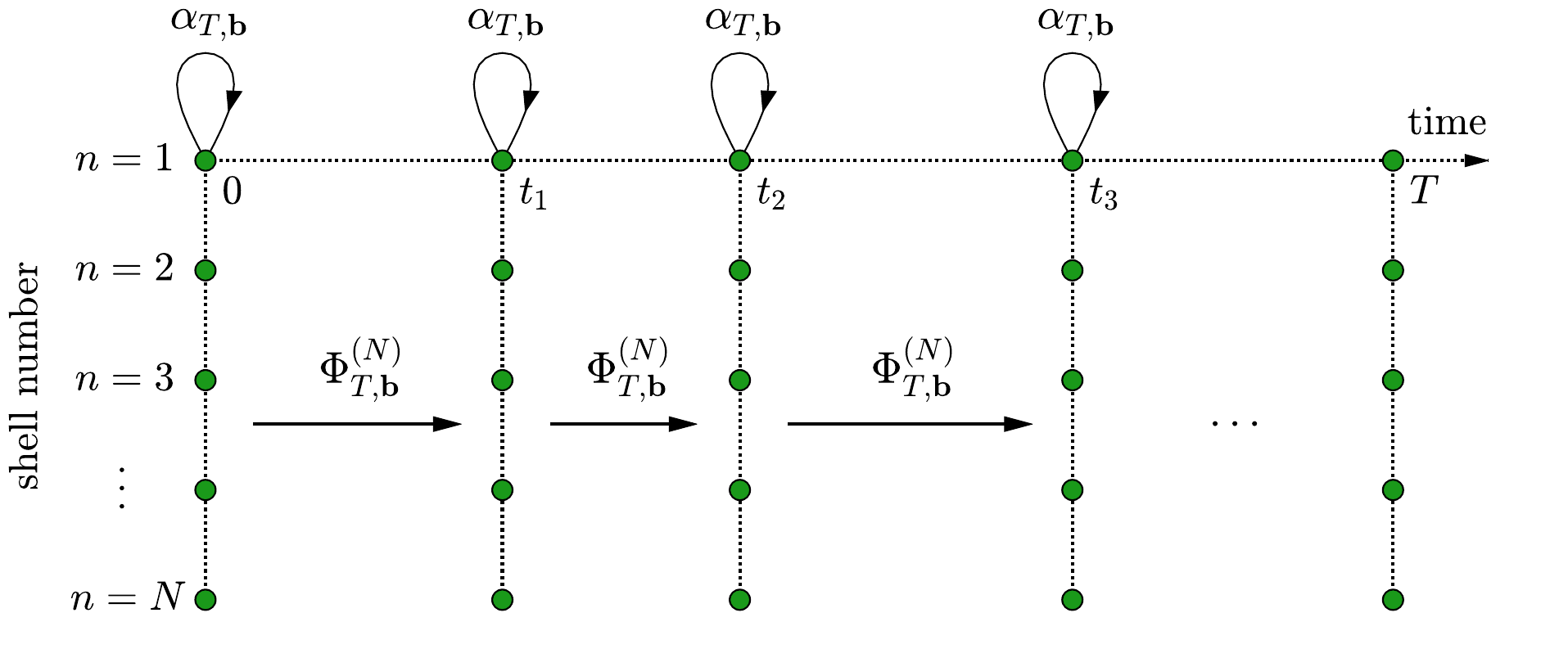}
\caption{Representation of the discrete-time algorithm of Fig.~\ref{fig5} in terms of deterministic maps $\alpha_{T,\mathbf{b}}$ and stochastic transitions governed by the flow kernels $\Phi_{T,\mathbf{b}}^{(N)}$.}
\label{fig7}
\end{figure}

Observe that the transitions $\mathbf{s}_j \mapsto \mathbf{s}_{j+1}$ for different values of $j$ are statistically independent. Hence, the transition probability from $\mathbf{s}_0$ to $\mathbf{s}_m$, which consists of $m$ independent steps, is determined by the probability kernel
	\begin{equation}
	\label{eqFK_1K}
	\big(\Phi_{T,\mathbf{b}}^{(N)} \circ \mathrm{A}_{T,\mathbf{b}}\big)^m(\mathbf{s}_m|\mathbf{s}_0)
	= \Big( \underbrace{\Phi_{T,\mathbf{b}}^{(N)} \circ \mathrm{A}_{T,\mathbf{b}} \circ \cdots \circ 
	\Phi_{T,\mathbf{b}}^{(N)} \circ \mathrm{A}_{T,\mathbf{b}}}_{m\ \textrm{times}} \Big)(\mathbf{s}_m|\mathbf{s}_0).
	\end{equation}
The number of steps $m$ between the initial state $\mathbf{s}_0 = (\mathbf{u}_{\mathrm{ini}},0)$ and the final state $\mathbf{s}_{\mathrm{fin}} = (\mathbf{u}_{\mathrm{fin}},T)$ is a random variable.
Treating the attainment of the final state as a stopping condition, we denote the corresponding kernel by
	\begin{equation}
	\label{eqFK_fin}
	\big(\Phi_{T,\mathbf{b}}^{(N)} \circ \mathrm{A}_{T,\mathbf{b}}\big)^{\mathrm{fin}},
	\end{equation}
with the understanding that a sufficient number of iterations is performed to reach the final time $T$.
Dropping the temporal components of the states $\mathbf{s}_0$ and $\mathbf{s}_{\mathrm{fin}}$, which are now redundant, we obtain the transition probability for $\mathbf{u}_{\mathrm{ini}} \mapsto \mathbf{u}_{\mathrm{fin}}$ in the form of a Markov kernel
	\begin{equation}
	\label{eqFK_1KF}
	\mathrm{P}_{T,\mathbf{b}}^{(N)}(\mathbf{u}_{\mathrm{fin}}|\mathbf{u}_{\mathrm{ini}}) = 
	\big(\Phi_{T,\mathbf{b}}^{(N)} \circ \mathrm{A}_{T,\mathbf{b}}\big)^{\mathrm{fin}}(\mathbf{s}_{\mathrm{fin}}|\mathbf{s}_0).
	\end{equation}

We will use the short notations 
	\begin{equation}
	\label{eqFK_flow}
	\Phi^{(N)} = \{ \Phi_{T,\mathbf{b}}^{(N)}: T \ge 0,\, \mathbf{b} \in \mathbb{C}^2\}, \quad
	\mathrm{P}^{(N)} = \{\mathrm{P}_{T,\mathbf{b}}^{(N)} : T \ge 0,\, \mathbf{b} \in \mathbb{C}^2\}
	\end{equation}
for the families of kernels with arbitrary times and boundary conditions. 
They are related as
	\begin{equation}
	\label{eqFK_1KFb}
	\mathrm{P}^{(N)} = \mathcal{P}[\Phi^{(N)}],
	\end{equation}
where the operator $\mathcal{P}$ denotes the transformation (\ref{eqFK_1KF}). 
By analogy with the theory of dynamical systems, we call $\Phi^{(N)}$ a flow kernel.
We conclude that the flow kernel $\Phi^{(N)}$ defines all transition probabilities $\mathrm{P}^{(N)}$ in the DT-LES model for a given  cutoff $N$.

\subsection{RG relation between the flow kernels}

In this section, we derive the RG operator, which expresses the flow kernel $\Phi^{(N+1)}$ in terms of $\Phi^{(N)}$. 
Since we are dealing with different $N$, it is convenient to think of the states $\mathbf{u} = (u_1,u_2,\ldots)$ as infinite sequences, which are filled by zeros at the shells $n > N$. 
Let us first introduce some deterministic maps that are needed to formulate our main result:
	\begin{equation}
	\label{eqFK_3}
	\begin{array}{ll}
	\sigma_+: (\mathbf{u},t) \mapsto (\mathbf{u}',t),  & \mathbf{u}' = 2(u_2,u_3,\ldots); \\
	\sigma_-: (\mathbf{u},t) \mapsto (\mathbf{u}',t),  & \mathbf{u}' = \frac{1}{2}(0,u_1,u_2,\ldots); \\
	\pi_1: (\mathbf{u},t) \mapsto (\mathbf{u}',0),  & \mathbf{u}' = (u_1,0,0,\ldots).
	\end{array}
	\end{equation}
The first two maps are related to the space-scaling transformation (\ref{eq1_SS}). Together with the projection $\pi_1$ they yields the identity as $\sigma_+ \circ \sigma_- = \pi_1+\sigma_- \circ \sigma_+ = \mathrm{id}$. For each map (\ref{eqFK_3}), we  denote the respective (Dirac) probability kernel by the capital latter $\Sigma_+$, $\Sigma_-$ and $\Pi_1$; see Appendix~\ref{subsecA_3}.
We also define the map
	\begin{equation}
	\label{eqFK_4}
	\xi_{T,\mathbf{b}}: (\mathbf{u},t) \mapsto (\mathbf{u}',t'),  \quad \mathbf{u}' = (u'_1,u'_2,u_3,u_4,\ldots), \quad 
	t' = \min \left( t+\varepsilon T_1,\, T \right),
	\end{equation}
where $u'_1$, $u'_2$ and $T_1$ are given by Eqs.~(\ref{eqDTS_0T}) and (\ref{eqDTL_b}) for $n = 1$. 

\begin{theorem}
\label{th1}
For $N \ge 2$, the flow kernels are related as
	\begin{equation}
	\label{eqFK_6}
	\Phi^{(N+1)} = \mathcal{R}[\Phi^{(N)}],
	\end{equation}
where $\mathcal{R}$ is called the RG operator and is defined by the relations
	\begin{equation}
	\label{eqFK_5}
	\Phi^{(N+1)}_{T,\mathbf{b}} (\mathbf{s}_{\mathrm{new}}|\mathbf{s}) 
	= \left[ \Pi_1 \ast \Sigma_- \circ \big(\Phi^{(N)}_{t',\mathbf{b}'} \big)^{\mathrm{fin}}
	\circ \Sigma_+  \right] \left(\mathbf{s}_{\mathrm{new}}|\mathbf{s}'\right), 
	\quad
	\mathbf{s}' = \xi_{T,\mathbf{b}}(\mathbf{s}), \quad 
	\mathbf{b}' = 2(b_0,u'_1).
	\end{equation}
As before, $\big(\Phi^{(N)}_{t',\mathbf{b}'} \big)^{\mathrm{fin}}$ denotes the composition $\Phi^{(N)}_{t',\mathbf{b}'} \circ \cdots \circ \Phi^{(N)}_{t',\mathbf{b}'}$, where the number of kernels is sufficient to reach the final time $t'$.
\end{theorem}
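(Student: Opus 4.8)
The plan is to establish the identity \eqref{eqFK_5} \emph{pathwise} at the level of the random maps, realization by realization, and then lift it to the stated identity between probability kernels using the composition and reconstruction rules of Appendix~\ref{subsecA_3}. Fix $N \ge 2$, a synchronized state $\mathbf{s} = (\mathbf{u},t)$, a final time $T$ and boundary data $\mathbf{b}$. First I would dissect a single application of the random map $\varphi^{(N+1)}_{T,\mathbf{b}}$ — the advance from one synchronized state to the next — into two stages. The initial stage is precisely the shell-$1$ step of the algorithm: since for $N \ge 2$ shell $1$ carries neither dissipation nor noise, this step is deterministic and coincides with the map $\xi_{T,\mathbf{b}}$ of \eqref{eqFK_4}, producing $\mathbf{s}' = (\mathbf{u}',t')$ with $t' = \min(t+\varepsilon T_1,T)$ and $u'_1,u'_2$ updated through the nonlinear scheme \eqref{eqDTL_b} for $n=1$. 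After this stage shell $1$ sits at time $t'$ while shells $2,\dots,N+1$ remain at time $t$; the second stage is the \emph{sub-dynamics} that advances shells $2,\dots,N+1$ up to $t'$ with shell $1$ held frozen at the value $u'_1$ (it is frozen because shell $1$ is never the minimal-time node, and the loop step \eqref{eq1_10BC} fires only at fully synchronized states, which do not recur until the sub-dynamics terminates).

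The heart of the argument is to identify this frozen sub-dynamics with the full run of the DT-LES model at cutoff $N$. I would apply the space-scaling $\sigma_+:\mathbf{u}\mapsto 2(u_2,u_3,\dots)$ of \eqref{eqFK_3}, which sends shells $2,\dots,N+1$ of the $(N{+}1)$-model onto shells $1,\dots,N$ of an $N$-model and turns the frozen shell $1$ and the forcing shell $0$ into the rescaled boundary $\mathbf{b}' = 2(b_0,u'_1)$, exactly as in \eqref{eqFK_5}. I then have to check that each elementary step is \emph{covariant} under this rescaling. For the nonlinear scheme this is the discrete space-scaling symmetry \eqref{eq1_SS} built into \eqref{eqDTL_b}; a short computation with $T_n = k_n^{-1}\bigl(\sum_{m\ge n-2}|u_m|^2\bigr)^{-1/2}$ shows the local turnover times match shell by shell (e.g.\ $T^{(N)}_1 = T^{(N+1)}_2$), so the adaptive time steps \eqref{eqDTS_TSt} agree and, crucially, no rescaling of physical time is incurred, the sub-run and the $N$-model run sharing the real interval up to the common final time $t'$. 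For the noise I would verify that the Ornstein--Uhlenbeck step \eqref{eq1_10} with the LES coefficients \eqref{eq1_9b} obeys $\nu'_m = \nu_n$ and $\sigma'_m = 2\sigma_n$ under $v_m = 2u_n$ (with $m = n-1$), so that, driven by the \emph{same} Gaussian variable $z$, the rescaled step reproduces the $N$-model step; this is the content of the LES space-scaling \eqref{eq1_SSN}, and it also matches the noisy shells $\{N,N{+}1\}$ of the $(N{+}1)$-model with $\{N{-}1,N\}$ of the $N$-model. Since the driving variables $z$ are independent across steps, this pathwise covariance passes to an identity of kernels: the frozen sub-dynamics, conjugated by $\sigma_\pm$, is generated by $\Phi^{(N)}_{t',\mathbf{b}'}$ and run until synchronization at $t'$, which is exactly $\bigl(\Phi^{(N)}_{t',\mathbf{b}'}\bigr)^{\infty}$.

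The one point requiring genuine care — and what I expect to be the main obstacle — is the precise bookkeeping of the nonlinear interactions, which fixes why the infinite power is taken of $\Phi^{(N)}$ \emph{without} the loop kernel $\mathrm{A}_{t',\mathbf{b}'}$, in contrast with the full transition \eqref{eqFK_1KF}. The interaction $B_2^-$ of the $(N{+}1)$-model, which after shifting becomes the $B_1^-$ term of the $N$-model, is consumed once and for all inside the shell-$1$ step $\xi$ (as the update of $u'_2$), and it must therefore be suppressed in the sub-dynamics; were one to use $\mathrm{P}^{(N)}$ rather than $\bigl(\Phi^{(N)}\bigr)^{\infty}$, this term would be spuriously re-injected at every intermediate synchronization of shells $2,\dots,N+1$. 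I would make this rigorous by checking, step by step, that every term $B_n^{\pm}$ of the $(N{+}1)$-model is captured exactly once and at the correct stage. Finally, to reconstruct the full $(N{+}1)$-state I undo the shift with $\sigma_-$ (returning shells $2,\dots,N+1$ and a zero in slot $1$) and reinstate the frozen shell $1$, whose value is carried by $\pi_1\mathbf{s}'$; this is the operation $\Pi_1 \ast (\,\cdot\,)$, legitimized by the decomposition $\pi_1 + \sigma_-\circ\sigma_+ = \mathrm{id}$ noted after \eqref{eqFK_3}. Assembling the deterministic Dirac kernels $\Sigma_\pm,\Pi_1$ and the random kernel $\Phi^{(N)}$ through the composition rules of Appendix~\ref{subsecA_3} then yields \eqref{eqFK_5}.
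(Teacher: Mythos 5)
Your argument is correct and follows essentially the same route as the paper's proof: the same decomposition of $\varphi^{(N+1)}_{T,\mathbf{b}}$ into the deterministic shell-$1$ step $\xi_{T,\mathbf{b}}$ followed by a frozen sub-dynamics conjugated by $\sigma_\pm$, the same identification of that sub-dynamics with $\bigl(\Phi^{(N)}_{t',\mathbf{b}'}\bigr)^{\infty}$ via the scaling symmetry (\ref{eq1_SSN}) with the loop steps (\ref{eq1_10BC}) skipped, and the same reconstruction $\Pi_1 \ast \Sigma_- \circ (\cdot) \circ \Sigma_+$. Your explicit verifications — the shell-by-shell matching of turnover times $T^{(N)}_1 = T^{(N+1)}_2$ with no rescaling of physical time, the covariance $\nu'_m=\nu_n$, $\sigma'_m=2\sigma_n$ of the Ornstein--Uhlenbeck step, and the bookkeeping showing that $B_2^-$ is consumed once inside $\xi$ (which is exactly why the paper replaces $(\Phi\circ\mathrm{A})^\infty$ by $\Phi^\infty$) — are details the paper leaves implicit, and they are all accurate.
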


\begin{proof}

\begin{figure}[tp]
\centering
\includegraphics[width=0.9\textwidth]{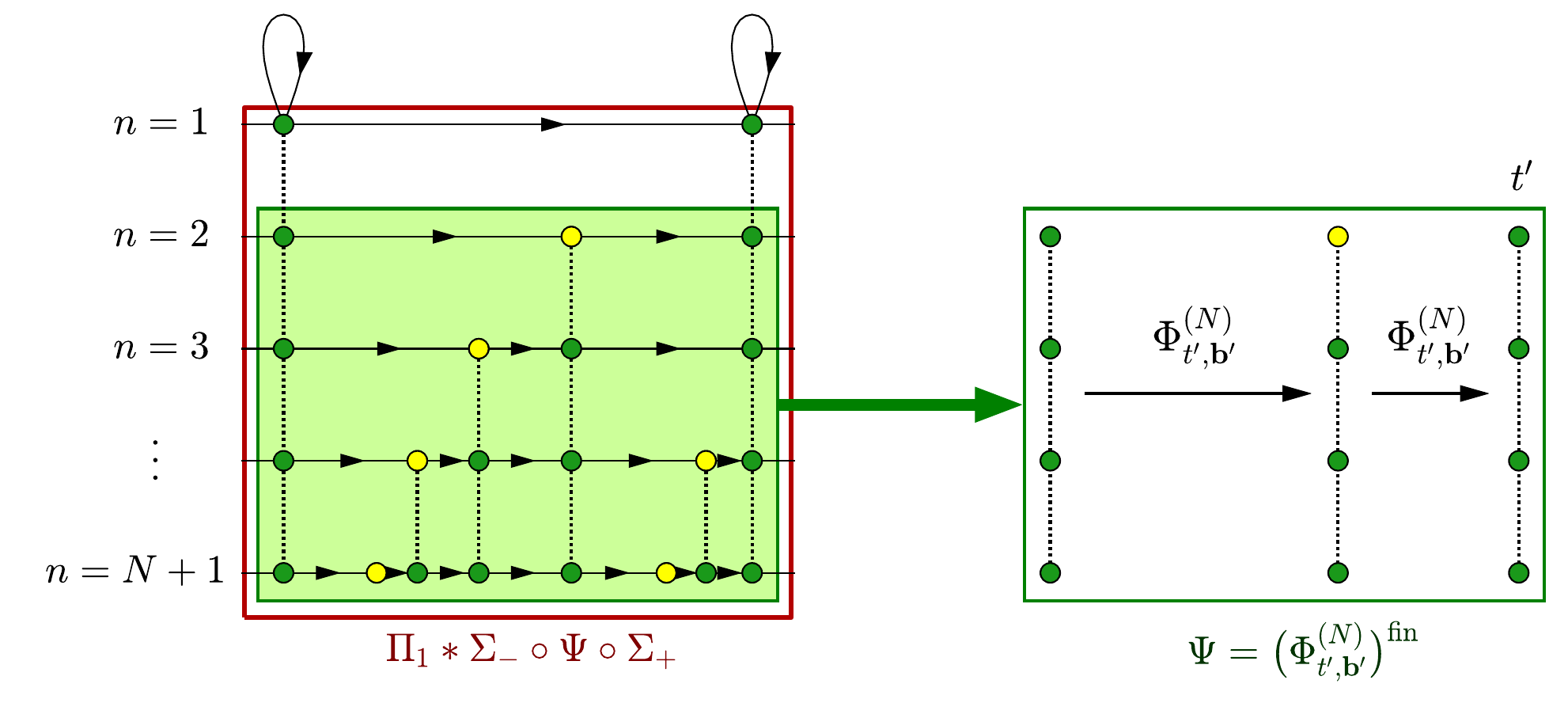}
\caption{Using the self-similar structure of DT-LES model with the cutoff $N+1$, we represent the dynamics at shells $n \ge 2$ in terms of the flow kernel corresponding to the cutoff $N$.}
\label{fig8}
\end{figure}

Consider the kernel $\Phi_{T,\mathbf{b}}^{(N+1)}(\mathbf{s}_{\mathrm{new}}|\mathbf{s})$ from Eq.~(\ref{eqFK_1}) for the increased cutoff $N+1$. 
It describes the probability transition $\mathbf{s} \mapsto \mathbf{s}_{\mathrm{new}}$ corresponding to the algorithm steps (\ref{eqDTS_Tnew}), (\ref{eqDTL_b}) and (\ref{eq1_10}); see Fig.~\ref{fig5}. 
The first step, which corresponds to $n = 1$, yields the state $\mathbf{s}'$ given by the map (\ref{eqFK_4}); recall that the viscous and noise step is not present because $N+1 \ge 3$; see Eq.~(\ref{eq1_9b}). The remaining steps affect only the components $n \ge 2$ and, therefore, can be represented by a kernel $\Psi$ describing the transition of the shifted state $\sigma_+(\mathbf{s}') \mapsto \mathbf{s}''$; see Fig.~\ref{fig8}. This yields $\mathbf{s}_{\mathrm{new}} = \pi_1(\mathbf{s}')+\sigma_- (\mathbf{s}'')$, where $\sigma_-$ is the backward shift and $\pi_1$ restores the first component. For the probability kernels, this yields 
	\begin{equation}
	\label{eqFK_5P1}
	\Phi^{(N+1)}_{T,\mathbf{b}} (\mathbf{s}_{\mathrm{new}}|\mathbf{s}) 
	= \left[ \Pi_1 \ast \Sigma_- \circ \Psi
	\circ \Sigma_+  \right] \left(\mathbf{s}_{\mathrm{new}}|\mathbf{s}'\right),
	\end{equation}
where the sum of states is replaced by a convolution of kernels; see  Appendix~\ref{subsecA_3}. Finally, as a consequence of the scaling symmetry (\ref{eq1_SSN}), one can see that the kernel $\Psi$ is given by essentially the same algorithm; see Fig.~\ref{fig8}. This algorithm is formulated for the new final time $t'$, the shifted boundary state $\mathbf{b}'$, the cutoff $N$, and also the loop steps (\ref{eq1_10BC}) must be skipped.
Thus, similarly to the relation (\ref{eqFK_1KF}) we obtain $\Psi = \big(\Phi_{t',\mathbf{b}'}^{(N)}\big)^{\mathrm{fin}}$. 
Combining this with Eq.~(\ref{eqFK_5P1}) proves the theorem.
\end{proof}

The RG relations of Theorem~\ref{th1} extend to a wide class of DT-LES models. 
One can verify that the same relations are valid for any dissipative and noise terms whose dependence on the cutoff $N$ has scaling symmetry (\ref{eq1_SSN}). In such models, the flow kernels for different $N$ are related by the same RG operator (\ref{eqFK_6}).

\section{Spontaneous stochasticity from a fixed-point RG attractor}
\label{secRGFP}

\subsection{Ideal limit as a dynamical system}

The main consequence of the RG relation (\ref{eqFK_6}) is that it represents the ideal limit of $N \to \infty$ as a dynamical system in the space of flow kernels. The RG operator $\mathcal{R}$, which determines this dynamics, depends only on the ideal system, i.e., on the nonlinear interactions $B_1^+$ and $B_1^-$ determining the map (\ref{eqFK_4}) through Eq.~(\ref{eqDTL_b}) with $n = 1$. 
This operator does not depend on the specific form of dissipation and noise at small scales, in particular, on the values of the coefficients $D$ and $C$ in case~(\ref{eq1_9b}). 
Which particular small-scale regularization is used is ``encoded'' in the flow kernel $\Phi^{(N)}$ and is passed on to the next flow kernel $\Phi^{(N+1)}$ due to scaling symmetry (\ref{eq1_SSN}).
In other words, the ideal model determines the RG operator, while the choice of regularization determines the initial flow kernel $\Phi^{(N_0)}$ for the RG dynamics.

By representing the ideal limit as the RG dynamics, we also identify the probability distributions obtained in this limit with an attractor of this dynamics. 
We now formulate the concept of a fixed-point RG attractor, which is based on the assumption that the classical theory of dynamical systems is applicable; see e.g. \cite{arnold1992ordinary}. 
Of course, this does not follow automatically, since the RG dynamics occurs in an infinite-dimensional space of flow kernels, but our numerical results will strongly support this assumption. 
The existence and stability of the RG attractor provides a theoretical explanation for spontaneous stochasticity, including its universality property and convergence process.

\subsection{Fixed-point RG attractor}

A flow kernel $\Phi^{\infty}$ is the fixed-point RG attractor if it satisfies the fixed-point relation $\mathcal{R}[\Phi^{\infty}] = \Phi^{\infty}$ and has the convergence property
	\begin{equation}
	\label{eqSFP_1}
	 \lim_{N \to \infty } \mathcal{R}^{N}[\Phi] = \Phi^{\infty}, \quad 
	\Phi \in \mathcal{B}(\Phi^{\infty}).
	\end{equation}
Here $\mathcal{B}(\Phi^{\infty})$ denotes the basin of attraction in the space of flow kernels, meaning that $\mathcal{B}(\Phi^{\infty})$ contains an open neighborhood of the fixed point $\Phi^{\infty}$ in an appropriate mathematical sense.
Using relation (\ref{eqFK_1KFb}), we write the ideal limit for the family of transition probabilities as
	\begin{equation}
	\label{eqSFP_1P}
	\lim_{N \to \infty } \mathcal{P}\big[ \mathcal{R}^{N}[\Phi] \big] = \mathrm{P}^{\infty}, \quad 
	\mathrm{P}^{\infty} = \mathcal{P}[\Phi^{\infty}].
	\end{equation}
These limits apply to the DT-LES models if the respective flow kernels belong to the basin of attraction.

The condition of RG attractor is conveniently expressed in terms of observables considered as functions $O: \mathbf{u}_{\mathrm{fin}} \mapsto \mathbb{R}$ of the final state $\mathbf{u}_{\mathrm{fin}}$. 
Given initial conditions $\mathbf{a}$, boundary conditions $\mathbf{b}$, and finite time $T$, we define the mean values as
	\begin{equation}
	\label{eqRGN_1b}
	\langle O \rangle^{(N)}_{T,\mathbf{b},\mathbf{a}} = \int O[\mathbf{u}] \mathrm{P}^{(N)}_{T,\mathbf{b}}(d\mathbf{u}|\mathbf{a}), \quad
	\langle O \rangle^\infty_{T,\mathbf{b},\mathbf{a}} = \int O[\mathbf{u}] \mathrm{P}^{\infty}_{T,\mathbf{b}}(d\mathbf{u}|\mathbf{a}),
	\end{equation}
which are expectations with respect to the respective probability distributions. 
Then the convergence to the RG attractor implies 
	\begin{equation}
	\label{eqRGN_1}
	\lim_{N \to \infty } \langle O \rangle^{(N)}_{T,\mathbf{b},\mathbf{a}} 
	= \langle O \rangle^\infty_{T,\mathbf{b},\mathbf{a}}.
	\end{equation}

We can now relate the existence of the RG attractor to the four properties of spontaneous stochasticity listed Section~\ref{subsecSpSt_A}. ($i$) The RG attractor implies naturally the convergence for respective probability distributions. 
($ii$) The stochasticity property follows if the limiting  family of transition probabilities $\mathrm{P}^{\infty}$ is not Dirac. 
($iii$) The spontaneity property requires that realizations of the limiting process solve the ideal model. This is true because the dynamics at each given scale is governed by ideal couplings for $N$ sufficiently large.
($iv$) Finally, universality with respect to the formulation of the ideal limit is a consequence of the attracting condition (\ref{eqSFP_1P}). 
This universality refers to all types of small-scale dissipation and noise, for which the respective flow kernels: (a) satisfy the RG relation $\Phi^{(N+1)} = \mathcal{R}[\Phi^{(N)}]$ and (b) belong to the basin of attraction. 

We remark that the RG formalism can be introduced at the level of the continuous-time model, similarly to~\cite{mailybaev2024rg}, as long as the model can be formulated as a finite system of SDEs. This condition, however, does not hold at the fixed point, and therefore the RG operator is not well defined there. Consequently, the linearized RG problem developed below also loses a clear mathematical meaning. These difficulties are resolved at the level of the discrete-time model.

\subsection{Linearization and stability of a fixed-point RG attractor}

In this subsection we assume that classical stability theory is applicable to RG dynamics. Then the RG operator can be linearized near its fixed point $\Phi^{\infty}$ such that
	\begin{equation}
	\label{eqSFP_2}
	\mathcal{R}[\Phi] \approx \Phi^{\infty}+\delta\mathcal{R}[\Psi], \quad \Psi = \Phi-\Phi^\infty,
	\end{equation}
where $\delta\mathcal{R}$ is a linear operator acting on perturbations $\Psi$ of the flow kernel. Note that $\Psi$ denotes a family of kernels $\Psi_{T,\mathbf{b}}(\mathbf{s}_{\mathrm{new}}|\mathbf{s})$, which are signed measures with respect to $\mathbf{s}_{\mathrm{new}}$. Substituting Eq.~(\ref{eqSFP_2}) into the RG relation (\ref{eqFK_6}) and denoting $\Psi^{(N)} = \Phi^{(N)}-\Phi^{\infty}$, we obtain the linearized dynamics 
	\begin{equation}
	\label{eqSFP_3}
	 \Psi^{(N+1)} = \delta\mathcal{R}[\Psi^{(N)}].
	\end{equation}
Following classical stability analysis, we assume that the linearized RG dynamics (\ref{eqSFP_3}) in the ideal limit $N \to \infty$  approaches asymptotically the eigenmode solution
	\begin{equation}
	\label{eqSFP_4}
	\Psi^{(N)} \approx c \rho^N \Omega.
	\end{equation}
Here $\rho$ is a leading (largest absolute value) real eigenvalue and $\Omega$ a corresponding
eigenvector of the eigenvalue problem
	\begin{equation}
	\label{eqSFP_5}
	\delta\mathcal{R}[\Omega] = \rho \Omega.
	\end{equation}
The eigenvector represents a collection of (signed) kernels $\Omega = \{\Omega_{T,\mathbf{b}} : T \ge 0,\, \mathbf{b} \in \mathbb{C}^2\}$. 
	
The stability condition is formulated as $|\rho| < 1$, in which case the correction (\ref{eqSFP_4}) decays as $N \to \infty$. 
It follows that both $\rho$ and $\Omega$ are universal, i.e. they do not depend on the regularization parameters (small-scale dissipation and noise). The prefactor $c \in \mathbb{R}$ is the only part of the expression (\ref{eqSFP_4}) that is not universal.

Similarly, we linearize the operator (\ref{eqFK_1KFb}) as
	\begin{equation}
	\label{eqSFP_6a}
	\mathcal{P}[\Phi] \approx \mathrm{P}^{\infty}+\delta\mathcal{P}[\Psi], \quad 
	\mathrm{P}^{\infty} = \mathcal{P}[\Phi^\infty], \quad \Psi = \Phi-\Phi^\infty.
	\end{equation}
Then, using Eqs.~(\ref{eqFK_1KFb}), (\ref{eqSFP_4}) and (\ref{eqSFP_6a}), we obtain the asymptotic form
	\begin{equation}
	\label{eqSFP_6}
	\mathrm{P}^{(N)} \approx 
	\mathrm{P}^\infty+c\rho^N  \mathrm{Q}, \quad
	\mathrm{Q} = \delta\mathcal{P}[\Omega],
	\end{equation}
describing the convergence of transitional probabilities in the ideal limit. 

Asymptotic expressions (\ref{eqSFP_4}) and (\ref{eqSFP_6}) are written for a real eigenmode. 
However, the eigenvalue $\rho$ and the respective eigenvector $\Omega$ in Eq.~(\ref{eqSFP_5}) can be complex. In this case the asymptotic expressions (\ref{eqSFP_4}) and (\ref{eqSFP_6}) take the form (see e.g. \cite{arnold1992ordinary})
	\begin{equation}
	\label{eqSFP_8}
	\Phi^{(N)} \approx \Phi^{\infty}+\mathrm{Re}\big( c \rho^N\Omega \big), \quad
	\mathrm{P}^{(N)} \approx \mathrm{P}^\infty
	+ \mathrm{Re}\big( c \rho^N \mathrm{Q} \big), 
	\quad
	\mathrm{Q} = \delta\mathcal{P}[\Omega],
	\end{equation}
where the non-universal prefactor $c$ is also complex.
Relations (\ref{eqSFP_8}) lead to a correction term in the limit~(\ref{eqRGN_1}) as
	\begin{equation}
	\label{eqRGN_2}
	\langle O \rangle^{(N)}_{T,\mathbf{b},\mathbf{a}} 
	\approx \langle O \rangle^{\infty}_{T,\mathbf{b},\mathbf{a}}
	+\mathrm{Re}\big( c\rho^N q \big), \quad
	q = \int O[\mathbf{u}] \mathrm{Q}_{T,\mathbf{b}}(d\mathbf{u}|\mathbf{a}),
	\end{equation}
where the second expression integrates the observable $O$ with respect to the eigenvector kernel $\mathrm{Q}_{T,\mathbf{b}}$ similarly to Eq.~(\ref{eqRGN_1b}).
Recall that the RG operator and, hence, the eigenmode depend on the ideal system only. Therefore, the quantities in the asymptotic relation~(\ref{eqRGN_2}) have different forms of universality: (a) The eigenvalue $\rho$ is fully universal in the sense that it does not depend on the initial and boundary conditions, the final time, small-scale regularization, or the observable. (b) The integral $q$ depends on the initial and boundary conditions, the final time and on the observable, but is independent of the regularization. (c) Finally, the prefactor $c$ depends on the regularization, but does not depend on the initial and boundary conditions, the final time, and the observable.
Recall that the independence of regularization is assumed here within the class of models to which the RG theory is applicable and whose flow kernels belong to the basin of the RG attractor.

\subsection{Numerical verification of the universal RG eigenvalue}

The RG attractor hypothesis explains the convergence to spontaneously stochastic solutions in the ideal limit, which we already observed numerically. In this section, we focus on the predictions of RG theory that are given by the complex eigenmode correction (\ref{eqRGN_2}). This is a new universality property, and its numerical confirmation provides additional support for our RG approach.
We perform this verification by simulating the DT-LES model with the discretization parameter $\varepsilon = 0.1$. We use dissipative and noise terms (\ref{eq1_9b}) with $D = 1$ and $C = 0.5$. The final time, initial and boundary conditions are as in Section~\ref{subsec_NSpSt}.

Consider the observable $O = |u_m|^p$ for some given shell $m$ and order $p$.
Then the expectation $\langle O \rangle^{(N)}_{T,\mathbf{b},\mathbf{a}} = \langle |u_m(T)|^p \rangle$ is the average over realizations of small-scale noise for a given finite time $T$, initial and boundary conditions $\mathbf{a}$ and $\mathbf{b}$, and cutoff shell $N$. Let us write relation (\ref{eqRGN_2}) in the form
	\begin{equation}
	\label{eqRGN_2b}
	\langle O \rangle^{(N)}_{T,\mathbf{b},\mathbf{a}} 
	\approx \langle O \rangle^{\infty}_{T,\mathbf{b},\mathbf{a}}
	+C |\rho|^N \cos (N\omega+\phi),
	\end{equation}
where we expressed $\rho = |\rho|e^{i\omega}$ and denoted the absolute value $C = |cq|$ and phase $\phi = \arg (cq)$. 
This asymptotic relation is verified in Fig.~\ref{figN1}, where the circles show the numerical averages $\big\langle |u_m(T)|^p \big\rangle$ for $m = 1,2$ and $p = 1,2$. 
By fitting this data with the RG eigenmode expression (\ref{eqRGN_2b}), where $\rho$ must be the same for different observables, we estimated the eigenvalue as
	\begin{equation}
	\label{eqRGN_2d}
	|\rho| = 0.84\pm 0.02, \quad \omega = \arg\rho = 2.28 \pm 0.01.
	\end{equation}
This fitting is obtained by minimizing a suitable norm of the difference between the left- and right-hand sides of Eq.~(\ref{eqRGN_2b}) with respect to the unknown parameters $\langle O \rangle^{\infty}_{T,\mathbf{b},\mathbf{a}}$, $\rho$, $C$ and $\phi$.
The error bounds in Eq.~(\ref{eqRGN_2d}) correspond to variations across different measurements. Solid lines in Fig.~\ref{figN1} show the right-hand sides of Eq.~(\ref{eqRGN_2b}) with the eigenvalue (\ref{eqRGN_2d}) and optimized values for $C$ and $\phi$. 
In all cases, the measured ensemble averages show very good agreement with the RG prediction for large $N$.
It is noteworthy that $|\rho|$ in Eq.~(\ref{eqRGN_2d}) is close to unity, which explains the slow convergence in the ideal limit. 

\begin{figure}[tp]
\centering
\includegraphics[width=0.9\textwidth]{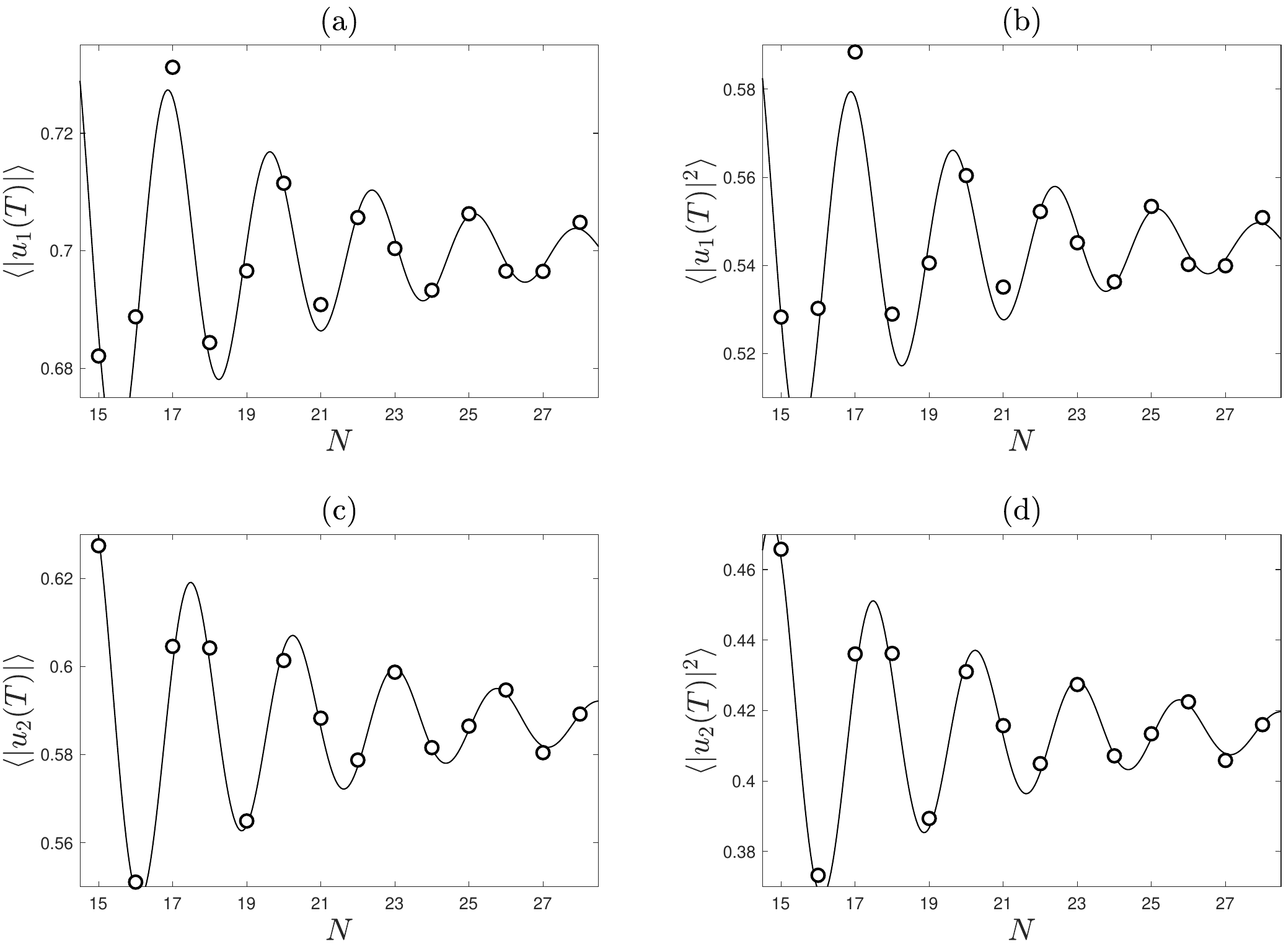}
\caption{The moments (a) $\big\langle |u_1(T)| \big\rangle$, (b) $\big\langle |u_1(T)|^2 \big\rangle$, (c) $\big\langle |u_2(T)| \big\rangle$ and (d) $\big\langle |u_2(T)|^2 \big\rangle$ computed numerically (circles) are compared with their asymptotic RG predictions (lines). The averages are estimated using $10^5$ random samples for each $N$, in which case the statistical errors are of the order of the circle size.}
\label{figN1}
\end{figure}

Finally, we verify numerically the universality of the RG eigenvalue. 
We do this by considering a different regularization with different initial and boundary conditions. First, instead of the large-scale initial state $\mathbf{a} = (1,0,0,\ldots)$, we consider the ``rough'' initial condition in the form of geometric progression $\mathbf{a} = (z,z^2,z^3,\ldots)$, where $z = 2^{-h}\exp(3i)$ with the H\"older exponent $h = 0.3$. We use the same boundary conditions as before with the final time $T = 1$. A typical dynamics is presented in Fig.~\ref{figN5}(a). The moments $\big\langle |u_1(T)| \big\rangle$ and $\big\langle |u_1(T)|^2 \big\rangle$ computed for different $N$ are shown in Figs.~\ref{figN5}(b,c). The solid lines present the asymptotic relations (\ref{eqRGN_2b}), where we used the same RG eigenvalue (\ref{eqRGN_2d}) and tuned prefactors and phases. 

\begin{figure}[tp]
\centering
\includegraphics[width=1\textwidth]{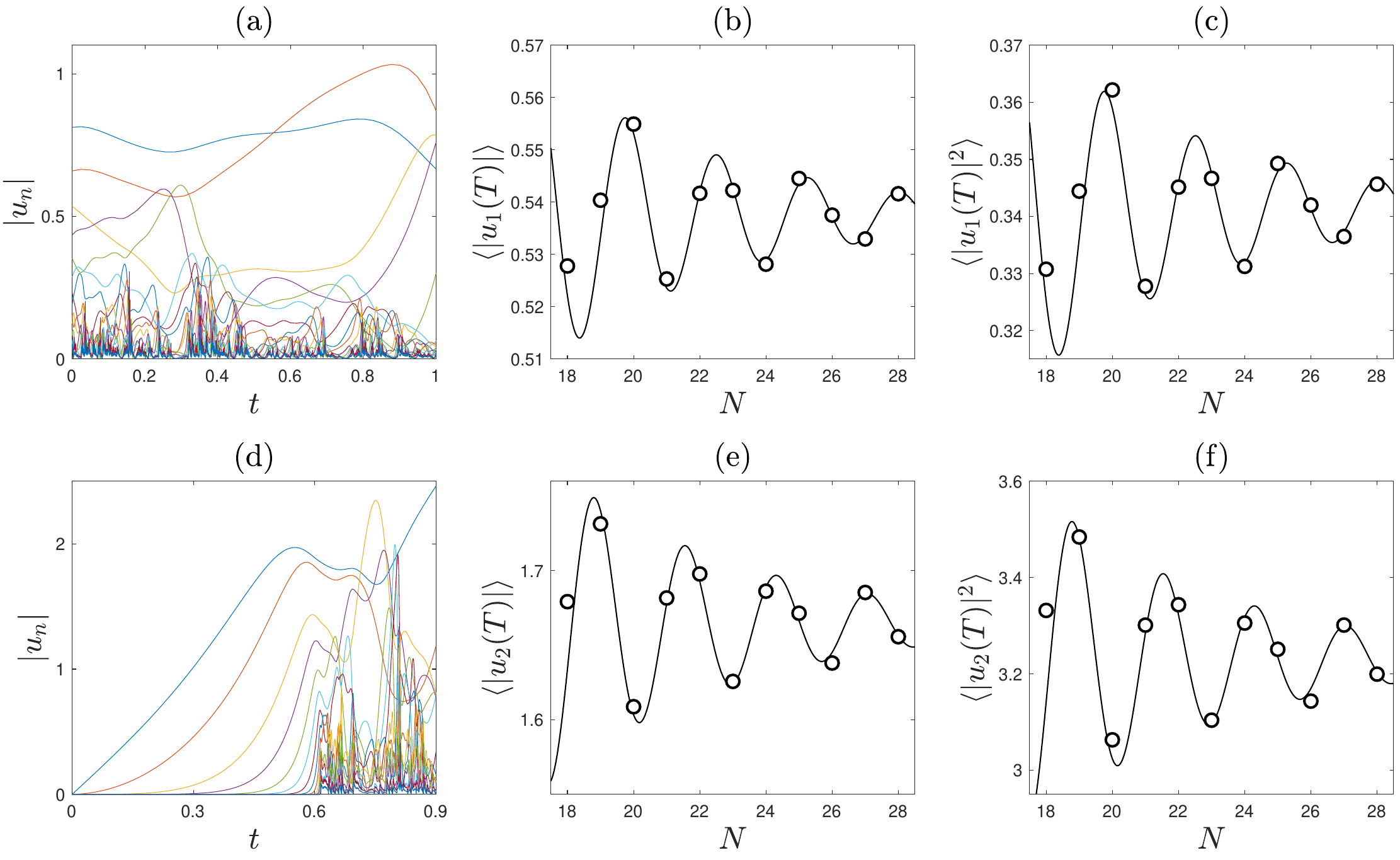}
\caption{(a) Dynamics of the LES model for rough initial data. (b,c) Dependence on the moments $\big\langle |u_1(T)| \big\rangle$ and $\big\langle |u_1(T)|^2 \big\rangle$ on $N$ (empty dots) fitted with the universal RG eigenmode prediction (solid lines). Statistics used $10^5$ simulations, and statistical errors are comparable with the dot size. (d-f) Similar results for the different small-scale regularization (\ref{eq1_5SGSmod}) for zero initial condition.}
\label{figN5}
\end{figure}

Finally, let us consider a different regularization. We modify the LES model (\ref{eq1_5SGS}) by introducing the dissipation and noise at the last three shells as
	\begin{equation}
	\label{eq1_5SGSmod}
	\frac{du_n}{dt} = \left\{ \begin{array}{ll}
	B_n, & n = 1,2,\ldots,N-3; \\
	B_n-D k_n |u_n| u_n+ C k_n^{1/2} |u_n|^{3/2} \eta_n,& n = N-2,\ldots,N;
	\end{array}\right.
	\end{equation}
with the parameters $D = 1$ and $C = 0.1$. The DL-LES model in Eq.~(\ref{eq1_9b}) is modified respectively. For the new boundary conditions we take $b_{-1} = 2$ and $b_0 = 3$, and choose zero initial conditions. This system has blowup time $t_b \approx 0.61$ and we set the final time at $T = 0.9$; see Fig.~\ref{figN5}(d). The averages $\big\langle |u_2(T)| \big\rangle$ and $\big\langle |u_2(T)|^2 \big\rangle$ are presented in Fig.~\ref{figN5}(e,f). Solid lines show the RG asymptotics with the same RG eigenvalue (\ref{eqRGN_2d}). 
In all cases, we observe a good match of the RG eigenmode approximation with the numerical results for large $N$, confirming the universality of the RG eigenvalue.

\section{RG approach to the fluctuating Sabra model} \label{sec5}

The RG approach developed in the previous sections was based on the key property that regularized models preserve the scaling symmetries of the ideal model: the time scaling symmetry (\ref{eq1_ST}) and the extended form of the space scaling symmetry (\ref{eq1_SSN}). 
Two examples are the LES models given by Eqs.~(\ref{eq1_5SGS}) and (\ref{eq1_5SGSmod}). 
We call such symmetry-preserving regularizations canonical. It is the self-similarity of the dissipative and noise terms in the canonical regularization that leads to the existence of the RG operator $\mathcal{R}$ and, thus, to a representation of the ideal limit $N \to \infty$ by the RG dynamics of flow kernels $\Phi^{(N)}$. 

However, from a physical point of view, dissipation and noise mechanisms are not required to preserve  symmetries of the ideal model, and in general they do not. 
For example, these symmetries are broken by both the viscous and noise terms in the fluctuating Sabra model (\ref{eq1_5}). 
In \cite{mailybaev2024rg} it was suggested that the effect of a specific non-canonical regularization can be understood if the model can be expressed in terms of a family of auxiliary canonical regularizations. 
We now show how this idea is applied to the fluctuating Sabra model. We will use a continuous-time formulation of the model, given that it can be approximated by discrete-time models to which RG theory is applicable.

The idea is to consider a family of auxiliary shell models having the form
	\begin{equation}
	\label{eqRGV_1}
	\frac{du_n}{dt} = B_n-\frac{Dk_n^2}{k_*^2 \tau_*} u_n + \frac{Ck_n}{k_*^2 \tau_*^{3/2}} \eta_n, 
	\quad n = 1,\ldots,N,
	\end{equation}
where $D$ and $C$ are fixed parameters describing the intensity of dissipation and noise and 
	\begin{equation}
	\label{eqRGV_1KT}
	k_* = 2^{N-M}, \quad
	\tau_* = \bigg( \sum_{j = -1}^{N-M} k_j^2|u_j|^2 \bigg)^{-1/2}
	\end{equation}
with a fixed integer parameter $M$. 
The dissipation and noise terms in Eq.~(\ref{eqRGV_1}) can be seen as characterized by the effective wave number $k_*$ and the turnover time $\tau_*$. 
The regularizations (\ref{eqRGV_1}) are canonical, i.e. they possess both symmetries (\ref{eq1_ST}) and (\ref{eq1_SSN}). One can check that discrete-time versions of models (\ref{eqRGV_1}) yield the same RG operator from Theorem~\ref{th1}: if $\Phi^{(N)}$ is the flow kernel for the model (\ref{eqRGV_1}), then $\Phi^{(N+1)} = \mathcal{R}[\Phi^{(N)}]$ is the flow kernel for the increased cutoff shell $N+1$. Strictly speaking, this derivation requires neglecting the dissipative and noise terms at shell $n = 1$, which is valid in the inviscid limit $N \to \infty$.
Hence, the ideal limit $N \to \infty$ for each auxiliary model (\ref{eqRGV_1}) is spontaneously stochastic and universal. We confirm this numerically in Fig.~\ref{fig4} for the parameters $D = 1$, $C = 0.01$, $M = 4$ and $N = 24$.

The fluctuating Sabra model (\ref{eq1_5}) follows from auxiliary model (\ref{eqRGV_1}) by choosing the non-stationary parameters $D(t)$ and $C(t)$ in the form
	\begin{equation}
	\label{eqRGV_2}
	D = \mathrm{Re}^{-1} k_*^2 \tau_* , \quad
	C = \sqrt{\Theta} \, k_*^{2} \tau_*^{3/2}.
	\end{equation}
When $\mathrm{Re}^{-1}$ and $\sqrt{\Theta}$ are small, for example in the ideal limit formulated in (\ref{eq1_4}), one can choose the time-dependent $M(t)$ such that both $D(t) \lesssim 1$ and $C(t) \lesssim 1$.
This is illustrated in Fig.~\ref{figS1} for a specific realization of the Sabra model with $\mathrm{Re} = 10^{9}$ from Fig.~\ref{fig1}(a). The correspondence (\ref{eqRGV_2}) means that, at every time instant, the Sabra model behaves as the auxiliary model with specifically chosen parameters. 
The RG theory predicts that the dynamics of the latter is approximated by the universal spontaneously stochastic process. Therefore, the same process approximates the dynamics of the fluctuating Sabra model.

\begin{figure}[tp]
\centering
\includegraphics[width=1\textwidth]{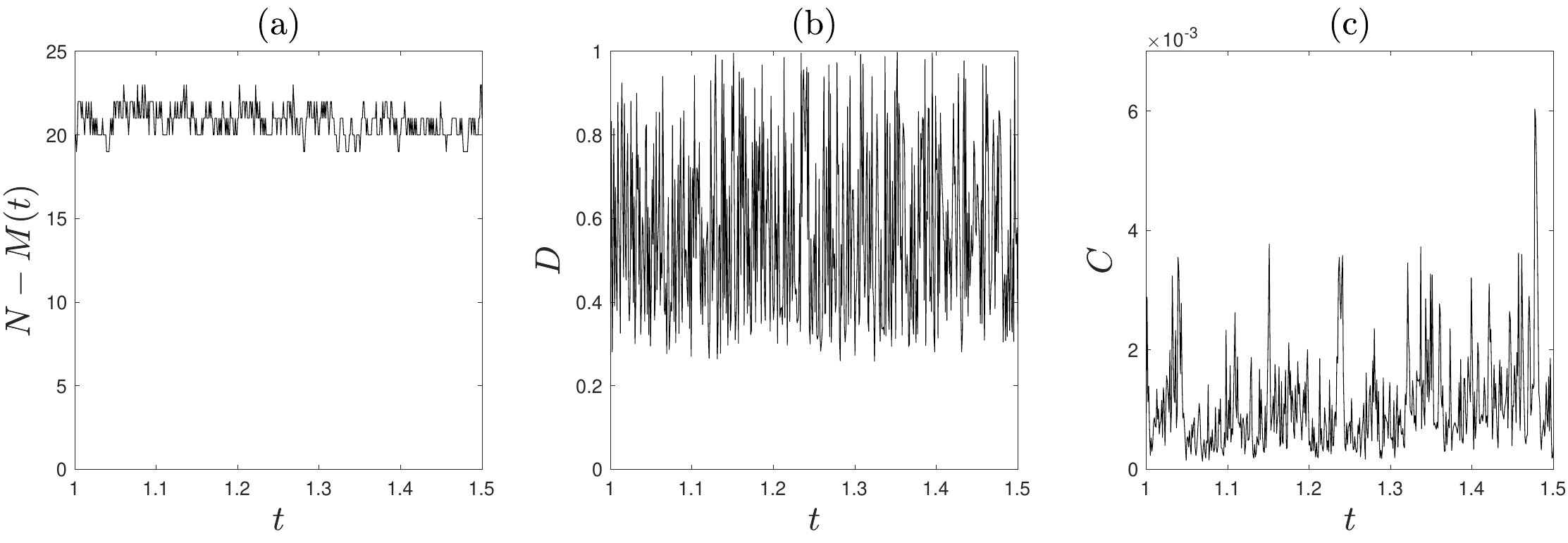}
\caption{Oscillations of the auxiliary model parameters $N-M(t)$, $D(t)$ and $C(t)$.}
\label{figS1}
\end{figure}

The presented argument points to a general way of explaining the universal property of spontaneous stochasticity. This universality is determined by systems with canonical regularizations to which RG theory is applicable. 
In cases where the regularization is non-canonical, such as in the fluctuating Sabra model, universality can be understood by establishing a connection to auxiliary canonical models. 

\section{Conclusion}

In this paper, a renormalization group (RG) formalism is developed for the phenomenon of spontaneous stochasticity in the fluctuating (Sabra) shell model of turbulence. 
For finite viscosities and amplitudes of noise, this model is described by stochastic differential equations, and the solution of the initial value problem is a stochastic process. 
Spontaneous stochasticity means that these solutions converge in an ideal limit in which viscosity and noise simultaneously tend to zero, and the limit remains random. 
The spontaneous nature of stochasticity reflects the fact that the limit process solves the initial value problem for a formally deterministic ideal model. 
Moreover, this limit is universal, i.e. it does not depend on the specific form of the dissipation mechanism and noise. 
In our theory, these properties are associated with the fixed point attractor of the RG operator.

The RG theory is formulated for discrete-time shell models. Such models are developed as approximations of the continuous-time Sabra model, converging to the latter as the approximation parameter $\varepsilon$ tends to zero. These discrete models are built on a dynamic space-time lattice and, from a numerical point of view, represent an adaptive finite-difference scheme. 
The key point is that these approximations exactly preserve scale symmetries and energy balance. 
It appears possible to extend such models to conserve other invariants, such as helicity; however, this does not seem to be important in the context of the present study.

The RG theory applies to class of models, in which the regularization (dissipation and noise) terms are scale invariant. The ideal limit corresponds to $N \to \infty$, where $N$ is a regularization (cutoff) parameter. For each $N$, the stochastic evolution is described by a flow kernel $\Phi^{(N)}$, which is a family of transition probabilities (Markov kernels) for all initial and boundary conditions. We derive the RG operator $\mathcal{R}$, which expresses the flow kernel $\Phi^{(N+1)}$ in terms of $\Phi^{(N)}$. This operator defines a dynamical system in the space of flow kernels and associates the ideal limit with an attractor of the RG dynamics. Here the dynamics (operator $\mathcal{R}$) depends on nonlinear couplings of the ideal system, but does not depend on the regularization (dissipation and noise). 
A specific regularization is introduced into the RG dynamics through the initial condition, i.e. through the initial flow kernel $\Phi^{(N_0)}$.

We conjecture that the RG dynamics has a fixed-point attractor. This property, which we verify with high accuracy using numerical simulations, naturally explains the observed convergence and universality of spontaneous stochasticity. Moreover, it makes a new prediction: convergence in the ideal limit is determined by the leading eigenvalue $\rho$ of the linearized RG operator. Numerical simulations show that $\rho \approx 0.84 \exp(2.28i)$ is a complex number that controls both the overall exponential decay and the oscillations in the ideal limit. This RG eigenvalue is universal: it does not depend on the initial and boundary conditions, as well as on the type of dissipation and noise.

This RG theory is constrained by scale-invariant forms of dissipation and noise, which we call canonical regularizations. Physical regularizations originating from the viscous mechanism and molecular noise are not canonical. We argue, however, that the convergence, spontaneous stochasticity and universality of  the ideal limit follow from the same RG theory. For this purpose, we introduce a family of auxiliary models with canonical regularizations, which coincide with the Sabra model at every instant of time, given a proper (time-dependent) choice of model parameters. Thus, the properties of the fluctuating Sabra model in the ideal limit can be derived from the properties of the auxiliary models, where the latter are described by the RG theory. 

The RG approach to Eulerian spontaneous stochasticity was earlier developed for discrete models on a self-similar (fractal) lattice~\cite{mailybaev2023spontaneous,mailybaev2025rg}. Our current work demonstrates that the same approach works for systems with continuous time such as the Sabra shell model. This is achieved by using a dynamic multiscale space-time lattice. The important question is whether this RG theory can be extended further to the fluctuating Navier-Stokes system? The obstacle for this development is the continuity of physical space and Galilean invariance. The other interesting direction of research is the study of spontaneous stochasticity in general, i.e. without a necessary connection to the Navier-Stokes system. 
To this end, the discrete-time symmetry-preserving models of this paper facilitate a mathematical formulation of the RG theory that is precise and convenient for numerical analysis.

Finally, let us remark on another interesting role of noise by contrasting deterministic and stochastic regularizations. Consider deterministically regularized models, for example by setting $C=0$ in the LES shell model~(\ref{eq1_5SGS}). One can then define a deterministic version of the RG operator acting on flow maps~\cite{mailybaev2024rg}. If such deterministic RG dynamics is chaotic in a classical sense, its invariant probability measure would yield a stochastic description in the limit $N\to\infty$. A natural question is whether this description coincides with the stochastic RG fixed point in Eq.~(\ref{eqSFP_1}).
Although the answer may be positive, this analogy with chaos is not straightforward. First, the deterministic RG dynamics exhibits superexponential separation of trajectories~\cite{mailybaev2024rg}, in contrast to the exponential growth characteristic of classical chaos. More importantly, probability evolution in the deterministic case is linear, as it is given by the pushforward of a deterministic map, whereas the stochastic RG operator in Theorem~\ref{th1} is nonlinear, allowing, for example, periodic stochastic attractors~\cite{mailybaev2025rg}. These differences suggest a fundamental distinction between spontaneous stochasticity and deterministic chaos and point to an interesting direction for future research.

\section{Appendix}

\subsection{Implicit algorithm step} \label{subsec_A1}

Here we solve Eq.~(\ref{eqDTL_b}) representing the implicit finite-difference relation for the nonlinear term. First, let us prove that this transformation preserves the energy $\mathcal{E}[\mathbf{u}] = \sum |u_n|^2$. Indeed, using Eq.~(\ref{eqDTL_b}) we obtain
	\begin{equation}
	\label{eqA_DTL_b}
	\begin{array}{rcl}
	\mathcal{E}[\mathbf{u}']-\mathcal{E}[\mathbf{u}]
	& = & |u'_{n}|^2+|u'_{n+1}|^2-|u_{n}|^2-|u_{n+1}|^2 
	\\[3pt]
	& = & \mathrm{Re}\big( (u_{n}^{\prime *}+u_{n}^*)(u'_{n}-u_{n})
	+(u_{n+1}^{\prime *}+u_{n+1}^*)(u'_{n+1}-u_{n+1}) \big)
	\\[3pt]
	\displaystyle
	& = & 
	2 \Delta \tau_n \mathrm{Re} \big( v_n^*B_n^+[\mathbf{v}] 
	+v_{n+1}^* B_{n+1}^-[\mathbf{v}] \big) = 0,
	\end{array}
	\end{equation}
where we denoted $\mathbf{v} = (\mathbf{u}+\mathbf{u}')/2$ and the last equality follows after substituting the explicit form (\ref{eq1B_2}) of the quadratic terms; see also Eq.~(\ref{eq1B_3_Pi}).

Substituting Eq.~(\ref{eq1B_2}) into implicit relations (\ref{eqDTL_b}) and recalling that $u'_m = u_m$ for $m \ne n,n+1$, we write
	\begin{equation}
	\label{eqA_DTL_b2}
	\begin{array}{rcl}
	u'_{n} & = &  \displaystyle
	u_{n}+\frac{i\Delta \tau_n}{4} \Big( 2 k_{n+1}u_{n+2}\big(u_{n+1}+u'_{n+1}\big)^*
	+k_{n}\big(u_{n+1}+u'_{n+1}\big)u_{n-1}^* \Big), \\[9pt]
	u'_{n+1} & = & \displaystyle
	u_{n+1}+\frac{i\Delta \tau_n}{4} 
	\Big( -2k_{n+1}u_{n+2}\big(u_{n}+u'_{n}\big)^*+k_{n}\big(u_{n}+u'_{n}\big)u_{n-1}\Big).
	\end{array}
	\end{equation}
Substituting the second expression into the right-hand side of the first expression, yields a linear equation with respect to $u'_n$ and its complex conjugate. It is solved analytically by separating the real and imaginary parts. Similarly, one finds $u'_{n+1}$ after substituting the first expression of Eq.~(\ref{eqA_DTL_b2}) into the right-hand side of the second expression.

\subsection{Computer algorithm for the discrete-time model}
\label{app_alg}

Here we repeat the full algorithm described in Section~\ref{subsec_DTSM}, reformulated in a way that is convenient for computer implementation. In the initialization step, one is given the cutoff shell $N$, the boundary state (\ref{eq1_7}), the initial condition (\ref{eq1_8}) and the terminal time $0 < T < \infty$. Define the shell times $\tau_n = 0$ for $n = 1,\ldots,N$, with the respective variables $u_n = a_n$.
The algorithm then proceeds by iterating the following steps:
\begin{itemize}
\item[(a)] Choose $n$ as the smallest shell index for which $\tau_n$ attains the minimum among $\tau_1,\ldots,\tau_N$.
\item[(b)] If $n = 1$, then assign a new value to the variable $u_1$ according to Eq.~(\ref{eq1_10BC}).
\item[(c)] Choose time step $\Delta\tau_n$ from Eqs.~(\ref{eqDTS_TSt}) and (\ref{eqDTS_0T}) and update the time $\tau_n \mapsto \tau_n+\Delta\tau_n$.
\item[(d)] Compute the variables $u'_n$ and $u'_{n+1}$ according to Eq.~(\ref{eqDTL_b}); see also Appendix~\ref{subsec_A1}. When $n = N$, the value $u'_{n+1}$ is ignored; see Eq.~(\ref{eq1_7cut}). 
\item[(e)] Assign a new value to the variable $u_n$ according to Eq.~(\ref{eq1_10}). Also, set $u_{n+1} = u'_{n+1}$.
\item[(f)] The algorithm stops if $\tau_n = T$ for all $n = 1,\ldots,N$. Otherwise, repeat the algorithm starting from step (a).
\end{itemize}

Note that the time steps are random, since they are expressed in terms of the turnover time~(\ref{eqDTS_0T}), 
which depends on random state variables. This turnover time is bounded from below as
\begin{equation}
T_n \ge \frac{1}{k_n}\big(\mathcal{E} + |b_0|^2 + |b_{-1}|^2\big)^{-1/2},
\label{Aeq_D0}
\end{equation}
where $\mathcal{E} = \sum |u_n|^2$ is the energy.
Because the time steps~(\ref{eqDTS_TSt}) are defined as an $\varepsilon$-fraction of $T_n$, it follows that the algorithm converges in a finite number of steps, provided that the energy does not blow up. 
From a physical point of view, it is natural to assume that a regularization is meaningful only if it does not lead to blowup; in this case, the algorithm converges in a finite---though random---number of steps.

Owing to the conservative form of the nonlinearity, the absence of blowup for the truncated fluctuating Sabra model~(\ref{eq1_5}) can be established by using the energy as a Lyapunov function; see, e.g.,~\cite{mao2007stochastic}. By the same reasoning, this global regularity is expected to extend to the corresponding discrete-time algorithm.
On the other hand, the nonlinear form of the noise term in the LES shell model~(\ref{eq1_5SGS}) may lead to blowup, and we observed numerical indications of such behavior when the noise coefficient $C$ is taken large compared to the dissipation parameter $D$. However, when $C$ is not large, no blowup behavior is observed numerically.

\subsection{Anomalous scaling in discrete-time Sabra model}
\label{subsecA_2}

Developed turbulence is traditionally characterized by the existence of an inertial interval of scales, which separates large scales corresponding to external forcing from small dissipative scales. This inertial interval is characterized by a power-law scaling of the velocity moments, which are also called structure functions~\cite{frisch1999turbulence}. 
Let us consider the discrete-time Sabra model with $\varepsilon = 1$, $\mathrm{Re} = 10^7$ and the remaining parameters set as in Section~\ref{subsec_NSpSt}. In our analysis we define instantaneous structure functions as 
	\begin{equation}
	\label{eqA2_1}
	S_p(\ell_n) = \langle |u_n(T)|^p \rangle, 
	\end{equation}
where the velocity is taken at the final time $T = 1.5$ and the averaging is performed with respect to realizations of noise. 

Structure functions computed numerically using $4 \times 10^4$ simulations are presented in Fig.~\ref{figA1}(a), where we plotted the logarithms $\log_2 S_p$ for $p = 1,\ldots,6$. These graphs confirm the existence of the inertial interval with the power-law scaling $S_p(\ell_n) \propto \ell_n^{\zeta_p} = 2^{-n\zeta_p}$, as shown by the light-red straight lines. The measured exponents $\zeta_p$ are shown by red circles in Fig.~\ref{figA1}(b). Their nonlinear dependence on the moment order $p$ (opposed to Kolmogorov's K41 prediction $\zeta_p = p/3$ shown by the straight blue line) is the well-known characteristic of intermittent behavior. For comparison, we also plotted the exponents $\zeta_p$ corresponding to the original Sabra model as reported in \cite{de2024extreme}.

\begin{figure}[tp]
\centering
\includegraphics[width=0.88\textwidth]{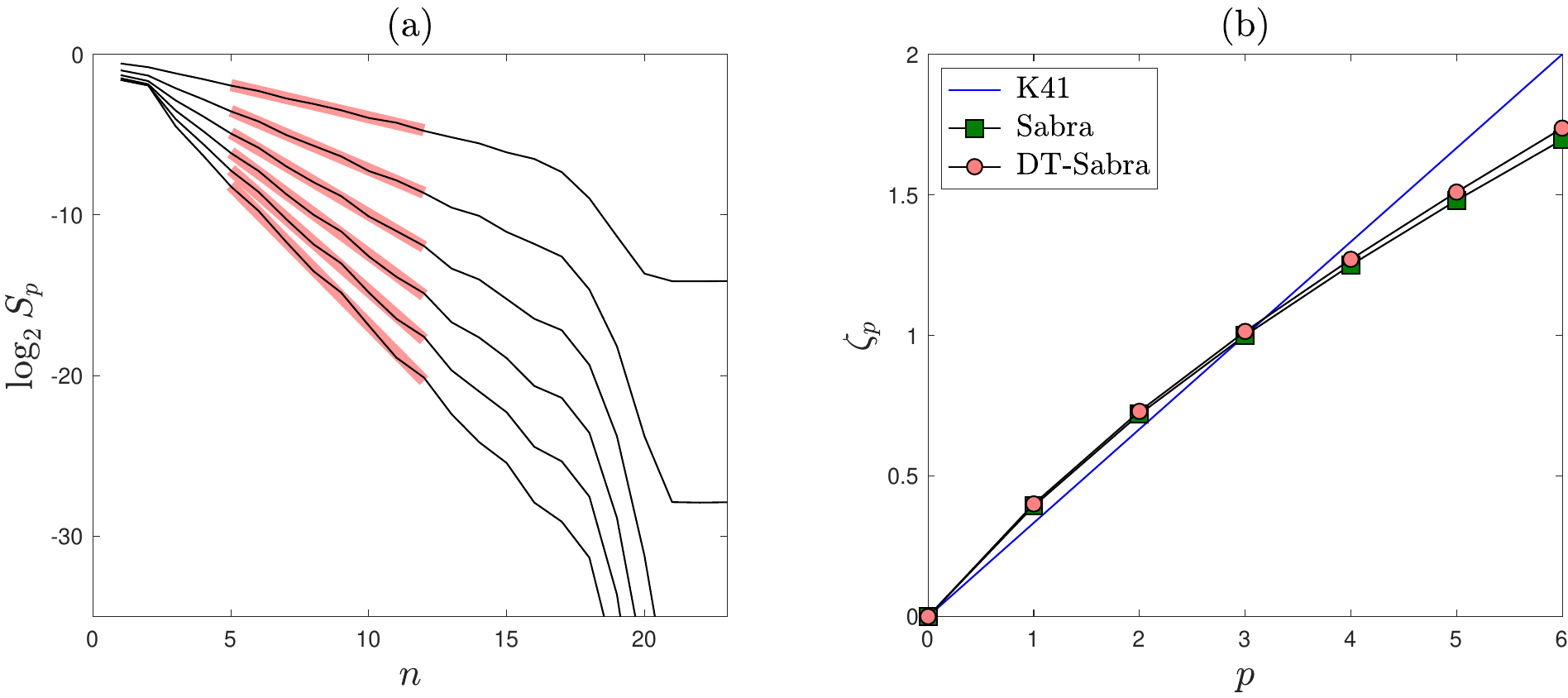}
\caption{(a) Logarithms $\log_2 S_p$ of structure functions in the discrete-time model. (b) Scaling exponents.}
\label{figA1}
\end{figure}

Our analysis not only confirms that our discrete-time shell model correctly reproduces the intermittency of developed turbulence. Note that our definition of structure functions (\ref{eqA2_1}) differs from the traditional one, where averaging is performed over long time. 
Thus, in the context of spontaneous stochasticity, the anomalous scaling laws are instantaneous (rather than long-term) characteristics of turbulent solutions; see also \cite{mailybaev2016spontaneously,bandak2024spontaneous}.

\subsection{Composition and sum of random maps}
\label{subsecA_3}

Here we recall some basic relations for random maps; see \cite{taylor2001user} for more details. Let $\varphi: \mathbf{s} \mapsto \mathbf{s}'$ be a random map, whose transition probability is determined by the probability kernel $\Phi(\mathbf{s}'|\mathbf{s})$. This kernel represents a probability measure for possible outcomes $\mathbf{s}'$ given an initial state $\mathbf{s}$. In particular, when the map $\varphi$ is deterministic, the respective kernel is given by the Dirac measure: $\Phi(\mathbf{s}'|\mathbf{s}) = \delta\big(\mathbf{s}'-\varphi(\mathbf{s})\big)$. Let $\varphi$ and $\psi$ be two statistically independent random maps with the respective probability kernels $\Phi$ and $\Psi$. Then the probability kernel for the composition $\psi \circ \varphi$ is denoted by $\Psi \circ \Phi$ and expressed as 
	\begin{equation}
	\label{eqA3_1}
	\Psi \circ \Phi (\mathbf{s}'|\mathbf{s}) = \int \Psi(\mathbf{s}'|\mathbf{s}'') \Phi (\mathbf{s}''|\mathbf{s}) \, d\mathbf{s}''.
	\end{equation}
Similarly, the probability kernel for the sum $\psi+\varphi$ is denoted by $\Psi * \Phi$ and expressed as 
	\begin{equation}
	\label{eqA3_2}
	\Psi * \Phi (\mathbf{s}'|\mathbf{s}) = \int \Psi(\mathbf{s}''|\mathbf{s}) \Phi (\mathbf{s}'-\mathbf{s}''|\mathbf{s}) \, d\mathbf{s}''.
	\end{equation}
This relation represents the convolution of probability kernels.

\vspace{2mm}\noindent\textbf{Acknowledgments.} 
This work was supported by CNPq - Conselho Nacional de Desenvolvimento Cient\'ifico e Tecnol\'ogico grant 308721/2021-7, FAPERJ - Fundac\~ao Carlos Chagas Filho de Amparo \`a Pesquisa do Estado do Rio de Janeiro grant E-26/201.054/2022, and CAPES MATH-AmSud project CHA2MAN. 

\vspace{2mm}\noindent\textbf{Data availability. } 
The data that support the findings of this article are openly available \cite{data}.

\bibliographystyle{plain}
\bibliography{refs}

@book{mao2007stochastic,
  title={Stochastic differential equations and applications},
  author={Mao, X.},
  year={2007},
  publisher={Elsevier}
}

@article{liao2025noise,
  title={Noise-expansion cascade: an origin of randomness of turbulence},
  author={Liao, S. and Qin, S.},
  journal={Journal of Fluid Mechanics},
  volume={1009},
  pages={A2},
  year={2025},
  publisher={Cambridge University Press}
}

@misc{data,
  author={The data of the figures are available under the link: \url{https://doi.org/10.5281/zenodo.17250259}.},
}

@article{eyink2020renormalization,
  title={Renormalization group approach to spontaneous stochasticity},
  author={Eyink, G. L. and Bandak, D.},
  journal={Physical Review Research},
  volume={2},
  number={4},
  pages={043161},
  year={2020},
  publisher={APS}
}

@article{de2024extreme,
  title={Extreme statistics and extreme events in dynamical models of turbulence},
  author={de Wit, X. M. and Ortali, G. and Corbetta, A. and Mailybaev, A. A. and Biferale, L. and Toschi, F.},
  journal={Physical Review E},
  volume={109},
  number={5},
  pages={055106},
  year={2024},
  publisher={APS}
}

@article{drivas2024statistical,
  title={{Statistical determinism in non-Lipschitz dynamical systems}},
  author={Drivas, T. D. and Mailybaev, A. A. and Raibekas, A.},
  journal={Ergodic Theory and Dynamical Systems},
  volume={44},
  number={7},
  pages={1856--1884},
  year={2024},
  publisher={Cambridge University Press}
}

@article{barlet2025spontaneous,
  title={{Spontaneous stochasticity in a 3d Weierstrass-ABC flow}},
  author={Barlet, A. and Cheminet, A. and Dubrulle, B. and Mailybaev, A. A.},
  journal={Nonlinearity},
  volume={39},
  number={1},
  pages={015018},
  year={2026}
}

@article{ortiz2025spontaneous,
  title={{Spontaneous stochasticity in the fluctuating Navier-Stokes equations on a logarithmic lattice}},
  author={Ortiz, E. and Campolina, C. S and Mailybaev, A. A.},
  journal={Preprint arXiv:2507.03196},
  year={2025}
}

@article{mailybaev2025rg,
  title={{RG analysis of spontaneous stochasticity on a fractal lattice: stability and bifurcations}},
  author={Mailybaev, A. A.},
  journal={Journal of Statistical Physics},
  volume={192},
  number={3},
  pages={1--22},
  year={2025},
  publisher={Springer}
}

@article{mailybaev2015stochastic,
  title={{Stochastic anomaly and large Reynolds number limit in hydrodynamic turbulence models}},
  author={Mailybaev, A. A.},
  journal={Preprint arXiv:1508.03869},
  year={2015}
}

@article{fjordholm2016computation,
  title={On the computation of measure-valued solutions},
  author={Fjordholm, U. S. and Mishra, S. and Tadmor, E.},
  journal={Acta Numerica},
  volume={25},
  pages={567--679},
  year={2016},
  publisher={Cambridge University Press}
}

@article{de2010admissibility,
  title={{On admissibility criteria for weak solutions of the Euler equations}},
  author={De Lellis, C. and Szekelyhidi, L.},
  journal={Archive for Rational Mechanics and Analysis},
  volume={195},
  pages={225--260},
  year={2010},
  publisher={Springer}
}

@article{daneri2021non,
  title={{Non-uniqueness for the Euler equations up to Onsager’s critical exponent}},
  author={Daneri, S. and Runa, E. and Szekelyhidi, L.},
  journal={Annals of PDE},
  volume={7},
  number={1},
  pages={8},
  year={2021},
  publisher={Springer}
}

@article{boffetta2017chaos,
  title={Chaos and predictability of homogeneous-isotropic turbulence},
  author={Boffetta, G. and Musacchio, S.},
  journal={Physical Review Letters},
  volume={119},
  number={5},
  pages={054102},
  year={2017},
  publisher={APS}
}

@article{mailybaev2023spontaneously,
  title={{Spontaneously stochastic Arnold’s cat}},
  author={Mailybaev, A. A. and Raibekas, A.},
  journal={Arnold Mathematical Journal},
  volume={9},
  number={3},
  pages={339--357},
  year={2023},
  publisher={Springer}
}

@article{bernard1998slow,
  title={Slow modes in passive advection},
  author={Bernard, D. and Gawedzki, K. and Kupiainen, A.},
  journal={Journal of Statistical Physics},
  volume={90},
  number={3},
  pages={519--569},
  year={1998},
  publisher={Springer}
}

@article{kupiainen2003nondeterministic,
  title={Nondeterministic dynamics and turbulent transport},
  author={Kupiainen, A.},
  journal={Annales Henri Poincar{\'e}},
  volume={4},
  number={2},
  pages={713--726},
  year={2003}
}

@article{falkovich2001particles,
  title={Particles and fields in fluid turbulence},
  author={Falkovich, G. and Gawedzki, K. and Vergassola, M.},
  journal={Reviews of Modern Physics},
  volume={73},
  number={4},
  pages={913},
  year={2001},
  publisher={APS}
}

@article{eyink2024space,
  title={{Space-time statistical solutions of the incompressible Euler equations and Landau-Lifshitz fluctuating hydrodynamics}},
  author={Eyink, G. L. and Peng, L.},
  journal={Nonlinearity},
  volume={38},
  number={8},
  pages={085011},
  year={2025},
}

@article{mailybaev2017toward,
  title={{Toward analytic theory of the Rayleigh--Taylor instability: lessons from a toy model}},
  author={Mailybaev, A. A.},
  journal={Nonlinearity},
  volume={30},
  number={6},
  pages={2466--2484},
  year={2017},
  publisher={IOP Publishing}
}

@article{biferale2018rayleigh,
  title={{Rayleigh-Taylor turbulence with singular nonuniform initial conditions}},
  author={Biferale, L. and Boffetta, G. and Mailybaev, A. A. and Scagliarini, A.},
  journal={Physical Review Fluids},
  volume={3},
  number={9},
  pages={092601(R)},
  year={2018},
  publisher={APS}
}

@article{eyink1996turbulence,
  title={Turbulence noise},
  author={Eyink, G. L.},
  journal={Journal of Statistical Physics},
  volume={83},
  number={5},
  pages={955--1019},
  year={1996},
  publisher={Springer}
}

@article{leith1972predictability,
  title={Predictability of turbulent flows},
  author={Leith, C. E. and Kraichnan, R. H.},
  journal={Journal of Atmospheric Sciences},
  volume={29},
  number={6},
  pages={1041--1058},
  year={1972}
}

@article{lorenz1969predictability,
  title={The predictability of a flow which possesses many scales of motion},
  author={Lorenz, E. N.},
  journal={Tellus},
  volume={21},
  number={3},
  pages={289--307},
  year={1969},
  publisher={Taylor \& Francis}
}

@article{thalabard2020butterfly,
  title={From the butterfly effect to spontaneous stochasticity in singular shear flows},
  author={Thalabard, S. and Bec, J. and Mailybaev, A. A.},
  journal={Communications Physics},
  volume={3},
  number={1},
  pages={122},
  year={2020},
  publisher={Nature Publishing Group UK London}
}

@book{taylor2001user,
  title={A user's guide to measure-theoretic probability},
  author={Pollard, D.},
  year={2002},
  publisher={Cambridge University Press},
  address = {Cambridge}
}

@book{arnold1992ordinary,
  title={Ordinary differential equations},
  author={Arnold, V. I.},
  year={1992},
  publisher={Springer},
  address={New York}
}

@article{mailybaev2023spontaneous,
  title={Spontaneous stochasticity and renormalization group in discrete multi-scale dynamics},
  author={Mailybaev, A. A. and Raibekas, A.},
  journal={Communications in Mathematical Physics},
  volume={401},
  number={3},
  pages={2643--2671},
  year={2023},
  publisher={Springer}
}

@article{mailybaev2024rg,
  title={{RG approach to the inviscid limit for shell models of turbulence}},
  author={Mailybaev, A. A.},
  journal={Nonlinearity},
  volume={38},
  pages={085010},
  year={2025}
}

@book{oksendal2013stochastic,
  title={Stochastic differential equations: an introduction with applications},
  author={Oksendal, B.},
  year={2013},
  publisher={Springer}
}

@article{garratt1994atmospheric,
  title={The atmospheric boundary layer},
  author={Garratt, J. R.},
  journal={Earth-Science Reviews},
  volume={37},
  number={1-2},
  pages={89--134},
  year={1994},
  publisher={Elsevier}
}

@book{landau1959fluid,
  title={Fluid Mechanics},
  author={Landau, L. D. and Lifshitz, E. M.},
  year={1959},
  publisher={Pergamon Press, London}
}

@article{dombre1998intermittency,
  title={{Intermittency, chaos and singular fluctuations in the mixed Obukhov-Novikov shell model of turbulence}},
  author={Dombre, T. and Gilson, J.-L.},
  journal={Physica D: Nonlinear Phenomena},
  volume={111},
  number={1-4},
  pages={265--287},
  year={1998},
  publisher={Elsevier}
}

@article{mailybaev2012renormalization,
  title={Renormalization and universality of blowup in hydrodynamic flows},
  author={Mailybaev, A. A.},
  journal={Physical Review E},
  volume={85},
  number={6},
  pages={066317},
  year={2012},
  publisher={APS}
}

@article{bandak2022dissipation,
  title={Dissipation-range fluid turbulence and thermal noise},
  author={Bandak, D. and Goldenfeld, N. and Mailybaev, A. A. and Eyink, G.},
  journal={Physical Review E},
  volume={105},
  number={6},
  pages={065113},
  year={2022},
  publisher={APS}
}

@article{bandak2024spontaneous,
  title={Spontaneous stochasticity amplifies even thermal noise to the largest scales of turbulence in a few eddy turnover times},
  author={Bandak, D. and Mailybaev, A. A. and Eyink, G. L. and Goldenfeld, N.},
  journal={Physical Review Letters},
  volume={132},
  number={10},
  pages={104002},
  year={2024},
  publisher={APS}
}

@article{mailybaev2016spontaneous,
  title={Spontaneous stochasticity of velocity in turbulence models},
  author={Mailybaev, A. A.},
  journal={Multiscale Modeling \& Simulation},
  volume={14},
  number={1},
  pages={96--112},
  year={2016},
  publisher={SIAM}
}

@book{pope2000turbulent,
  title={Turbulent flows},
  author={Pope, S. B.},
  year={2000},
  publisher={Cambridge University Press, New York}
}

@article{mailybaev2016spontaneously,
  title={Spontaneously stochastic solutions in one-dimensional inviscid systems},
  author={Mailybaev, A. A.},
  journal={Nonlinearity},
  volume={29},
  number={8},
  pages={2238--2252},
  year={2016},
  publisher={IOP Publishing}
}

@article{ruelle1979microscopic,
  title={Microscopic fluctuations and turbulence},
  author={Ruelle, D.},
  journal={Physics Letters A},
  volume={72},
  number={2},
  pages={81--82},
  year={1979},
  publisher={Elsevier}
}

@article{l1998improved,
  title={Improved shell model of turbulence},
  author={L'vov, V. S. and Podivilov, E. and Pomyalov, A. and Procaccia, I. and Vandembroucq, D.},
  journal={Phys. Rev. E},
  volume={58},
  number={2},
  pages={1811},
  year={1998},
  publisher={APS}
}

@article{biferale2003shell,
  title={Shell models of energy cascade in turbulence},
  author={Biferale, L.},
  journal={Annual Review of Fluid Mechanics},
  volume={35},
  pages={441--468},
  year={2003},
}

@book{frisch1999turbulence,
  title={{Turbulence: the Legacy of A.N. Kolmogorov}},
  author={Frisch, U.},
  year={1995},
  publisher={Cambridge University Press},
  address = {Cambridge}
}

\end{document}